\newcommand{\cT}{{\mathcal T}}
\newcommand{\cP}{{\mathcal P}}
\newcommand{\iV}{\mathring{V}}
\newcommand{\tree}{\mathop{tr}}
\newcommand{\bP}{\mathbf{P}}
\renewcommand{\phi}{U}
\renewcommand{\psi}{F}
\newcommand{\ep}[1]{#1^\dagger} 
\newcommand{\epf}{$\hfill \square$}
\newcommand{\envelope}{(\raisebox{-.5pt}{\scalebox{1.45}{\Letter}}\kern-1.7pt)}
 \journalname{}
\begin{document}

\title{Folding and unfolding phylogenetic trees and networks
}

\titlerunning{Folding and unfolding phylogenetic networks}        

\author{Katharina T. Huber \and  Vincent Moulton \\  Mike Steel \and Taoyang Wu
}


\institute{Katharina T. Huber \at
School of Computing Sciences, University of East Anglia,  Norwich, NR4 7TJ, UK.\\
              \email{katharina.huber@cmp.uea.ac.uk}           
                      \and
            Vincent Moulton \at
School of Computing Sciences, University of East Anglia,  Norwich, NR4 7TJ, UK.\\
              \email{vincent.moulton@cmp.uea.ac.uk}            
              \and  
           Mike Steel \at
             School of Mathematics and Statistics, University of Canterbury, 
Christchurch, New Zealand.\\
             \email{mike.steel@math.canterbury.ac.nz}             
              \and
           Taoyang Wu 
           \at
            School of Computing Sciences, University of East Anglia,  Norwich, NR4 7TJ, UK.\\
            \email{taoyang.wu@uea.ac.uk}
}

\date{Received: date / Accepted: date}

\maketitle

\begin{abstract}

Phylogenetic networks are rooted, labelled directed acyclic
graphs which are commonly used to represent reticulate evolution.
There is a close relationship between phylogenetic networks and 
multi-labelled trees (MUL-trees). Indeed, any  
phylogenetic network $N$ can be ``unfolded'' to obtain a MUL-tree
$U(N)$ and, conversely,  a MUL-tree $T$ can in certain circumstances 
be ``folded'' to obtain a phylogenetic network $F(T)$ that exhibits $T$.
In this paper, we study properties of the operations $U$ and $F$ 
in more detail. In particular, we introduce the class of
stable networks, phylogenetic networks $N$ for which $F(U(N))$ is isomorphic to
$N$,
characterise such networks, and show that that they
are related to the well-known class of tree-sibling networks.
We also explore how the concept of displaying a tree in
a network $N$ can be related to displaying the tree in the MUL-tree 
$U(N)$. To do this, we develop a phylogenetic
analogue of graph fibrations. This allows us to view $U(N)$ as the 
analogue of the universal cover of a digraph, and to establish a 
close connection between displaying trees in $U(N)$ and reconciling
phylogenetic trees with networks.

\keywords{ Phylogenetic networks \and Multi-labelled trees \and Graph fibrations \and Tree and network reconciliation \and Universal cover of a digraph}
 \subclass{05C90 \and 92D15}
\end{abstract}

\section{Introduction}

Phylogenetic networks are rooted, directed acyclic
graphs whose leaves are labelled by some set of species 
(see Section~\ref{definitions} for precise 
definitions of the concepts that we introduce in this
section). Such networks are
used by biologists to represent the evolution of species
that have undergone reticulate events such as hybridization
and there is much recent work on these structures
~\citep[cf. e.g.][]{gus14,hus-10}.
In \citet{HM06} a close relationship
is described between phylogenetic networks and multi-labelled trees 
(MUL-trees), leaf-labelled 
trees where more than one leaf may have the same label.
Essentially, it is shown that it is always possible to 
``unfold'' a phylogenetic network $N$ to obtain a MUL-tree $U(N)$ and
that, conversely, a MUL-tree $T$ can under certain conditions
be ``folded'' to obtain a phylogenetic network $F(T)$ that exhibits $T$.
We illustrate these operations in Fig.~\ref{fig:mul-tree}
(see Section~\ref{sec:two-constructions} for more details). The tree
$T$ in (i) is a MUL-tree, and the network in (iii) 
is obtained by first inserting vertices into $T$ and then
folding up the resulting tree by identifying vertices in $T$
to obtain $F(T)$; the 
unfolding $U(N)$ of $N$ (which is essentially obtained
by reversing this process) is precisely $T$.

\begin{figure}[h]
\begin{center}
{\includegraphics[scale=0.8]{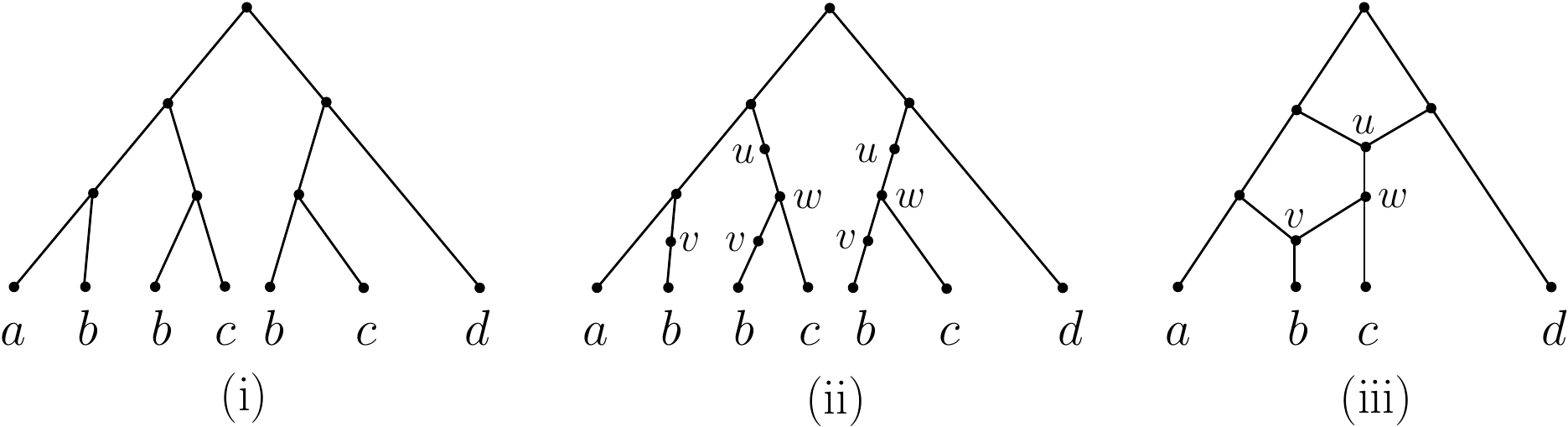}}
\end{center}
\caption{(i) A MUL-tree $T$, and (iii) the folding $F(T)$ of $T$, with 
(ii) the intermediate tree $\ep{T}$ used to guide the folding.
\label{fig:mul-tree}}
\end{figure}

Applications of the operations $F$ and $U$ include 
the construction of evolutionary histories of polyploids 
in terms of phylogenetic networks \citep{L09,M15}.
In particular, polyploid organisms contain
several copies of a genome, and if a tree is constructed from
these genomes (or specific genes in these genomes) 
a MUL-tree can be obtained by labelling 
each leaf by the species that has the corresponding genome.
By folding this MUL-tree a representation
of the evolution of the species can then be obtained
(in terms of a phylogenetic network), from the evolutionary
history of the genomes. In this representation, vertices
in the network with indegree two represent hybridisation events,
where two parent species have produced a child which has
the combined set of genomes of both of its parents.

In this paper, we study properties of the
$F$ and $U$ operations in some detail and, 
in the process, show that they have some
interesting connections with other areas such as 
gene tree/species network reconciliation \citep{WZ,ZNWZ} and 
the theory of graph fibrations \citep{BV02}.
To do this, we begin by reviewing the 
concepts of MUL-trees and phylogenetic networks 
in the next section, and present some 
general properties of the folding and unfolding operations
in Section~\ref{sec:two-constructions}.
We then consider the interrelationship between the 
folding and unfolding operations. 

More specifically, although it is always 
the case that $U(F(T))$ is isomorphic to 
$T$ for any MUL-tree $T$ \citep{HM06}, the
same situation does not apply if the $U$ and $F$ operations
are applied in the opposite order to some network
as there are networks $N$ for which $F(U(N))$ is 
not isomorphic to $N$ (we give an example shortly in 
Fig.~\ref{fig:fun-counterex}).
Therefore, it is of interest to understand the networks
$N$ for which $F(U(N))$ is isomorphic to $N$. We call 
these {\em stable networks}. In Section~\ref{stable},
we present a characterization for
stable phylogenetic networks (see Theorem~\ref{thm:stable}). 
Using this result we are then able to show that the 
well-known class of binary, tree-sibling 
networks as defined in \citet{CLRV08} are stable 
(see Corollary~\ref{cor:tree-sibling-stable}).
We expect that stable networks could be of interest
as they can provide a canonical representative
for the set of all networks that display 
a particular MUL-tree~\citep[cf.][for choosing canonical representatives of networks that display a set of trees]{PS15}.

In Section~\ref{sec:folding}, we show that the
unfolding and folding operations are closely related 
to concepts that arise in the theory of graph fibrations 
~\citep[cf.][for a review of this area]{BV02}. 
In particular, we define the concept of a folding
map between a MUL-tree and a phylogenetic network. 
As one consequence, we show that the unfolding 
of a network can be considered as a phylogenetic
analogue of the universal cover of a digraph.
This allows us to provide an alternative characterisation
for stable networks (Corollary~\ref{stable-universal}).
It is woth noting that an alternative framework for considering
maps between phylogenetic networks in developed in \cite{W12}.

We then focus on the problem of displaying  trees in 
networks.
In Section~\ref{display}, we demonstrate that it is NP-complete
to decide whether or not a phylogenetic tree is
displayed by a stable network 
(Theorem~\ref{thm:tree-display-network}). 
This is of interest since in \cite{K08} it is shown that 
it is NP-complete to decide if a tree is displayed by a
network, but in \cite{ISS10} it is shown that this problem
is polynomial for certain special classes of networks
(such as normal and tree-child networks).

Finally, in Section~\ref{sec:weak-display}, we define and study a
new way in which a tree may be displayed 
in a network: We say that 
a phylogenetic tree is {\em weakly displayed} by a phylogenetic
network $N$ if it is displayed
by the MUL-tree $U(N)$. Using the concepts
developed in Section~\ref{sec:folding}, we 
provide a characterization for when a
tree is weakly displayed by a network in terms
of a special type of tree reconciliation (Theorem~\ref{weak-reconcile}).
This characterisation allows us to show that, in contrast to
displaying a tree, it is possible to decide
in polynomial time whether or not a phylogenetic tree is 
weakly displayed by a phylogenetic network having 
the same leaf-set (Corollary~\ref{compute}).

\section{Definitions}\label{definitions}

Throughout this paper, we let $X$ denote a finite set of size at least two. 
In addition, all graphs that we consider are connected.

\subsection{Rooted DAGs} \label{def:rooted-DAGs}
Suppose $G$ is a directed acyclic
graph in which multiple arcs are allowed and
which has a single root, denoted by 
$\rho_G$. We denote the set of vertices of $G$ 
by $V(G)$ and the set of arcs by $A(G)$.
If $v$ is a vertex of $G$, then the {\em in-degree} of $v$,
denoted by $indeg(v)$, is the number of incoming arcs of $v$, and the
{\em out-degree} of $v$, denoted by $outdeg(v)$, is the number of outgoing 
arcs of $v$. We say that a vertex $v\in V(G)$ is {\em below}
a vertex $w\in V(G)-\{v\}$ if there exists a directed path from $w$ to
$v$ in $G$.  We call a vertex $v$ of $G$ a
{\em reticulation vertex} of $G$, if  $outdeg(v)=1$ and
$indeg(v)\geq 2$ holds. We call $v$ a  {\em tree vertex}
of $G$ if $indeg(v)=1$ and either $outdeg(v)=0$ holds, in which case 
we call $v$ a {\em leaf} of $G$, or $outdeg(v)\geq 2$. We denote
the set of leaves of $G$ by $L(G)$ and the set of 
{\em interior vertices} $v$ of $G$, 
that is, $v$ is neither the root nor a leaf of $G$, by $\iV(G)$.

\subsection{MUL-trees}

We say that a multi-set $M$ is {\em a multi-set on $X$}
if the set resulting from $M$ by ignoring the
multiplicities of the elements in $M$ is $X$.
Following \cite{HM06}, we define a {\em pseudo multi-labelled tree} 
$\cT$ on $X$, or a {\em pseudo MUL-tree} on $X$ for short,
to be a pair $(T,\chi)$ consisting of a 
rooted directed tree $T$ together
with a labelling map $\chi: X\to \cP(S)=2^S-\{\emptyset\}$ from 
$X$ into the set $\cP(S)$ of non-empty subsets of  
the leaf set $S=L(T)$ of $T$ such that 
\begin{enumerate}
\item[(i)] for all $x,y\in X$ distinct
$\chi(x)\cap\chi(y)=\emptyset$, and
\item[(ii)] for every leaf $s\in S$ there exists some $x\in X$ 
with $s\in\chi(x)$.
\end{enumerate}
We call $T$ a {\em MUL-tree on $X$} if it does not have any
vertices with in-degree one and out-degree one.
For example, in Fig.~\ref{fig:mul-tree}, the tree in (i) is a
MUL-tree on $X=\{a,b,c,d\}$, and the tree
in (ii) is a 
pseudo MUL-tree on $X$, but it is not a MUL-tree on $X$..
If the map $\chi$ is clear from the context then we
will write $T$ rather than $(T,\chi)$ and if the set $X$ is of
no relevance to the discussion then we will call $T$ a MUL-tree
rather than a MUL-tree on $X$.
We say that two (pseudo) MUL-trees $(T_1, \chi_1)$ and $(T_2,\chi_2)$ on
$X$ are {\em isomorphic} if 
there is a digraph isomorphism $\xi:V(T_1)\to V(T_2)$ such that,
for all $x\in X$ and $v\in V(T_1)$, we have $v\in \chi_1(x)$ if and
only if  $\xi(v)\in \chi_2(x)$. 

Suppose $T$ is a pseudo MUL-tree. For $v$ a non-root vertex of $T$, we
denote by $T(v)$ the connected subgraph of $T$ that contains $v$  
obtained by deleting the incoming arc of $v$. Clearly $T(v)$ is
a pseudo MUL-tree. We call a pseudo MUL-tree $T'$ a {\em pseudo subMUL-tree}
of $T$ if there exists a non-root vertex $v$ of $T$ such that $T(v)$
and $T'$ are isomorphic. For $T$ a MUL-tree we
say that a subMUL-tree $T'$ of $T$ is
{\em inextendible} if there exist distinct vertices $v$ and $v'$ of $T$ 
such that $T'=T(v)$ and $T(v)$ and $T(v')$ are isomorphic. Loosely
speaking, a subMUL-tree of $T$ is inextendible if $T$ contains
more than one copy of that subMUL-tree. We say that a 
subMUL-tree $T'$ of $T$ is {\em maximal inextendible} if $T'$ is
inextendible and  any other inextendible subMUL-tree $T''$ of $T$
that contains $T'$ as a subMUL-tree is isomorphic with $T'$.

To illustrate these definitions consider for example  
the MUL-tree $T$ and its folding $F(T)$ depicted in 
Fig.~\ref{fig:mul-tree}(i) and (iii), respectively. Then the
three leaves labelled $b$ are all inextendible subtrees of $T$ and  so 
are the two leaves labelled $c$. Each one of two subtrees of $T$ of length two
(ignoring the directions of the arcs of $T$) that have leave set
$\{b,c\}$ is maximal inextendible. 

\subsection{Phylogenetic networks}

An {\em $X$-network $N$} is a 
rooted directed acyclic multi-graph 
such that 
\begin{enumerate}
\item[(i)] there exists a unique root $\rho_N$ of $N$  that has in-degree 
zero and out-degree at least two, 
\item[(ii)]  every vertex of $N$ except the root is either a 
reticulation vertex or a  tree vertex,
\item[(iii)] there exists no vertex of in-degree one and out-degree one, and
\item[(iv)] the set $L(N)$ of leaves of $N$ is $X$.
\end{enumerate}

A {\em phylogenetic network (on $X$)} is an 
$X$-network that does not have multi-arcs. 
A {\em phylogenetic tree on $X$}
is a phylogenetic network on $X$ that has no reticulation vertices.
We say that a phylogenetic network $N$ is {\em binary} if the degree of every
reticulation vertex and every non-leaf tree-vertex is three and
$outdeg(\rho_N)=2$.
Finally, we say that two $X$-networks $N$ and $N'$ 
are {\em isomorphic}
if there exists a bijection $\kappa:V(N)\to V(N')$
such that for all vertices $u,v\in V(N)$ the number of arcs in $N$
with head $u$ and tail $v$ equals the number of arcs in $N'$ with head
$\kappa(u)$ and tail $\kappa(v)$, and $\kappa$ is the
identity on $X$.

\section{Folding and unfolding}
\label{sec:two-constructions}

In this section, we recall the unfolding and
folding operations mentioned in the introduction
that were first proposed in  \cite{HM06} ~\citep[see also][for the binary case]{HMSSS12}.

We first describe the unfolding operation $U$
which constructs a pseudo MUL-tree $U^*(N)$ from an $X$-network $N$
as follows:
\begin{itemize}
\item the vertices of $U^*(N)$ are the directed paths in $N$  
that start at $\rho_N$,
\item there is an arc from vertex $\pi$ in $U^*(N)$ to vertex $\pi'$
in $U^*(N)$ 
if and only if $\pi' = \pi a$ (i.e. $\pi a$ is the path $\pi$ extended
by the arc $a$), and
\item the vertices in $U^*(N)$ that start at $\rho_N$ 
and end at some $x$ in $X$ are labelled by $x$.
\end{itemize}
The MUL-tree obtained by suppressing all 
in-degree one and out-degree one vertices in $U^*(N)$, if there
are any, is denoted by $U(N)$. In \cite{HM06} it is
shown that $U(N)$ is indeed a MUL-tree.

We denote for all vertices $v$ of an directed graph 
$G$ as in Section~\ref{def:rooted-DAGs} 
the set of children of $v$ by $ch(v)$ and 
say that an $X$-network $N$ {\em exhibits} a
MUL-tree $T$ if the MUL-trees $U(N)$ and $T$ are isomorphic.
In particular, any $X$-network $N$ exhibits the MUL-tree $U(N)$.
Note that there exist MUL-trees $T$ for which 
there is no phylogenetic network that exhibits $T$
(for example, the binary MUL-tree 
with two leaves both labelled by the same element).

We now describe the folding operation $F$
for constructing an $X$-network $F(T)$ from a MUL-tree $T$
(cf. \cite[p.\,628]{HM06} for more details and 
Fig.~\ref{fig:mul-tree} above for an illustration),
which can be thought of as the reverse of the unfolding operation $U$.
We first construct a
pseudo MUL-tree $\ep{T}$ from $T$ which will guide this operation. 
To do this we
need to define a sequence $\tau: T=T_1, T_2,...$  of pseudo MUL-trees. 
Suppose $i\geq 1$ is such that we have already constructed tree  $T_i$. 
Then we obtain $T_{i+1}$ as follows: 
If there is no
inextendible subMUL-tree of $T_i$, we declare
$T_i$ to be the last tree in $\tau$. Otherwise,
we take a maximal inextendible subMUL-tree of $T_i$. 
Let $v$ be the root of this tree and let $S_v$ be
the subset of vertices $w$ of $T_i$ with $T_i(w)$ isomorphic to
$T_i(v)$. Then, to obtain $T_{i+1}$, for each $w \in S_v-\{v\}$
we remove the subtree $T_i(w)$ and the arc with head $w$ from $T_i$.
If this has rendered the root $\rho_{T_i}$ of $T_i$ a vertex with out-degree one
then we collapse the remaining arc with tail $\rho_{T_i}$. 
Otherwise, we suppress the resulting vertex with in-degree 
and out-degree one.

Now, to obtain $\ep{T}$,  we consider
each tree in $\tau$  other than $T$ in turn. Let
$i\geq 2$ and assume that $v\in V(T)$ is such
that $T_i$ is constructed from $T_{i-1}$. Then we subdivide
all of those arcs $a$ in $T$ for which $T(h_T(a))$ and $T(v)$ are
isomorphic (as pseudo MUL-trees). The pseudo MUL-tree obtained once the
last element in $\tau$ has been processed is $\ep{T}$.
Finally, to obtain $F(T)$, we define an
equivalence relation $\sim_{\ep{T}}$ on $V(\ep{T})$ that
identifies all pairs of vertices $v,w$ in $V(\ep{T})$ with
$\ep{T}(v)$ isomorphic with $\ep{T}(w)$ (as pseudo-MUL-trees),
and let $F(T)$ be the $X$-network
obtained by taking the quotient of $\ep{T}$ by $\sim_{\ep{T}}$.
More precisely, let $G(T)$ denote the DAG with
vertex set $\{[u] _{\sim_{\ep{T}}}\,:\, u\in V(\ep{T}) \}$
and (multi)-set of arcs obtained by joining any two vertices 
$u,v\in V(\ep{T})$ for which $[u]_{\sim_{\ep{T}}}\not=[v]_{\sim_{\ep{T}}}$ holds
by $m\geq 0$ arcs $([u]_{\sim_{\ep{T}}},[v]_{\sim_{\ep{T}}})$
if and only if for one (and hence for all) 
$u'\in [u]$ the size of $ch(u')\cap [v]_{\sim_{\ep{T}}}$ is $m$.
The $X$-network obtained from $G(T)$ by suppressing all
vertices of indegree one and outdegree one and defining 
the leaf labels in the natural way is $F(T)$.

For example, consider the MUL-tree $T$ depicted in Fig.~\ref{fig:mul-tree}(i).
Then the pseudo MUL-tree displayed in Fig.~\ref{fig:mul-tree}(ii) is $\ep{T}$.
The two vertices labelled $u$ make up $[u]_{\sim_{\ep{T}}}$ and the 
vertex in $F(T)$, depicted in Fig.~\ref{fig:mul-tree}(iii), representing
$[u]_{\sim_{\ep{T}}}$ is labelled $u$. Similarly, the two vertices 
labelled $w$  in $\ep{T}$ make up $[w]_{\sim_{\ep{T}}}$ which we again
represent in $F(T)$ in terms of $w$. Clearly
$[u]_{\sim_{\ep{T}}}\not=[w]_{\sim_{\ep{T}}}$ and $|ch(u')\cap [w]_{\sim_{\ep{T}}}|=1$
holds for all $u'\in [u]_{\sim_{\ep{T}}}$. Hence, there is precisely
 one arc in $F(T)$ from $u$ to $w$.


Note that any MUL-tree
$T$ is isomorphic with $U(F(T))$ (as MUL-trees) \citep{HM06}. Thus, if there
is no risk of confusion we will sometimes identify $T$ and $U(F(T))$.
Also, note that
if $T$ is binary, then $F(T)$ is {\em semi-resolved}, that is,
every tree vertex in $F(T)$ has out-degree 2.
Moreover, in \citet[Proposition 3]{HM06} it is shown that
if $F(T)$ is semi-resolved, then  $F(T)$
has the minimum number of reticulation vertices amongst all
phylogenetic networks that exhibit $T$.

In general, the folding of an arbitrary MUL-tree on $X$
need not be a phylogenetic network.
An example for this is again
furnished by the MUL-tree with leaf set the multi-set $\{a,a\}$.
%
We now characterize those MUL-trees $T$ for which $F(T)$ 
{\em is} a phylogenetic network. For clarity of exposition
we denote, for any rooted directed acyclic graph $G$  
in which multiple arcs are allowed and any
$a=(u,v)$ in $G$ by $h(a)=h_G(a)=v$ 
its {\em head} and by  $t(a)=t_G(a)=u$ its {\em tail}. 

\begin{proposition}\label{prop:phy-network}
Suppose $T$ is a binary MUL-tree on $X$. Then $F(T)$ is a 
phylogenetic network if and only if there is no
pair of distinct vertices $v,w$ in $T$ which share a
parent in $T$ and are such that $T(v)$ and $T(w)$ are inextendible. 
\end{proposition}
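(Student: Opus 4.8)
The plan is to analyze when $F(T)$ fails to be a phylogenetic network, i.e.\ when it has a multi-arc, and to trace this back to a structural condition on $T$. Recall that $F(T)$ is obtained from $G(T)$ by suppressing degree-$(1,1)$ vertices, and $G(T)$ carries a multi-arc precisely when there exist classes $[u]_{\sim_{\ep T}} \neq [v]_{\sim_{\ep T}}$ with $|ch(u') \cap [v]_{\sim_{\ep T}}| \geq 2$ for $u' \in [u]_{\sim_{\ep T}}$. Since suppression of degree-$(1,1)$ vertices cannot create a multi-arc between two vertices that were not already joined by one (it only merges paths through degree-$(1,1)$ vertices), $F(T)$ is a phylogenetic network if and only if $G(T)$ has no multi-arc; I would first make this reduction precise. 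So the task becomes: characterize when some vertex $u'$ of $\ep T$ has two distinct children $v_1, v_2$ lying in the same $\sim_{\ep T}$-class, i.e.\ with $\ep T(v_1)$ isomorphic to $\ep T(v_2)$ as pseudo-MUL-trees.

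Next I would relate $\sim_{\ep T}$-equivalence in $\ep T$ back to inextendibility in $T$. The construction of $\ep T$ subdivides certain arcs of $T$ but does not change which maximal subtrees are isomorphic in an essential way: two vertices $v_1, v_2$ of $T$ satisfy $\ep T(v_1) \cong \ep T(v_2)$ (after identifying $V(T) \subseteq V(\ep T)$) if and only if $T(v_1) \cong T(v_2)$, because the subdivision process is driven entirely by the isomorphism type of $T(h_T(a))$ and therefore acts compatibly on isomorphic subtrees. Consequently, $v_1, v_2$ lie in a common $\sim_{\ep T}$-class with both being original vertices of $T$ iff $T(v_1) \cong T(v_2)$, and $T(v_1)$ (equivalently $T(v_2)$) is then inextendible by definition. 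This is the key translation step, and I expect it to be the main obstacle: one must carefully check that subdivision vertices introduced by $\ep T$ do not spuriously equate children that were not children of a common vertex in $T$, and conversely that every ``problem pair'' in $G(T)$ descends from a pair of siblings in $T$ with isomorphic inextendible subtrees. This requires a clean inductive description of $\ep T$ — in particular that a vertex $u'$ of $\ep T$ has two children in one class exactly when, tracing $u'$ back through the subdivisions, it corresponds to a vertex of $T$ with two children $v, w$ satisfying $T(v) \cong T(w)$.

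For the forward direction (contrapositive), suppose $T$ has distinct $v, w$ sharing a parent $p$ with $T(v), T(w)$ inextendible; since inextendibility of $T(v)$ and the fact that $T$ is binary (so $p$ has exactly the two children $v,w$) force $T(v) \cong T(w)$ at the point in $\tau$ where this common type is processed, the images of $v$ and $w$ remain distinct children of the image of $p$ in $\ep T$ and get identified by $\sim_{\ep T}$, producing a double arc in $G(T)$ and hence failure of $F(T)$ to be a phylogenetic network. For the converse, suppose $F(T)$ is not a phylogenetic network; then $G(T)$ has a double arc, so some $u'$ in $\ep T$ has two distinct children $v_1, v_2$ with $\ep T(v_1) \cong \ep T(v_2)$; pulling back through the subdivisions of $\ep T$ yields a vertex of $T$ with two children $v, w$ (necessarily the two children, as $T$ is binary) with $T(v) \cong T(w)$, whence both $T(v)$ and $T(w)$ are inextendible and share the parent — contradicting the hypothesis. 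I would present these two directions as the two halves of the proof, with the lemma-like translation statement of the previous paragraph isolated and proved first. \epf
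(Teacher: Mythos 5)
Your overall strategy (reduce to multi-arcs in $G(T)$, then translate $\sim_{\ep{T}}$-equivalence of siblings back to isomorphism of subtrees in $T$) is close in spirit to the paper's, but the forward direction as you present it rests on a step that is false. You write that ``inextendibility of $T(v)$ and the fact that $T$ is binary \ldots force $T(v) \cong T(w)$''. They do not: take $T=((a,b),(a,(b,c)))$; the leaves $a$ and $b$ in the first cherry are siblings, and each of $T(v)$, $T(w)$ is inextendible (each has another copy elsewhere in $T$), yet $T(v)\not\cong T(w)$. Since your production of a double arc in $G(T)$ rests entirely on the two children of $p$ (or their subdivision surrogates in $\ep{T}$) falling into a single $\sim_{\ep{T}}$-class, and that happens only when $\ep{T}(v)\cong\ep{T}(w)$, your argument proves the forward implication only in the special case $T(v)\cong T(w)$. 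The substance of this direction in the paper's proof is exactly the part you skip: choosing a highest bad pair and then tracking the chain $z_0=v'',\dots,z_l=v$ of subtrees containing both $T(v)$ and $T(w)$ to argue that $T(v)$ is eventually \emph{rendered maximal inextendible} during the construction of the sequence $\tau$, at which point the folding introduces a multi-arc. Nothing playing this role appears in your proposal, so the forward direction is not established.

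Two further points. First, your reduction ``$F(T)$ has a multi-arc iff $G(T)$ has a multi-arc'' is not justified by the remark you give: suppressing an in-degree-one, out-degree-one vertex $[m]$ on a path $[u]\to[m]\to[v]$ \emph{does} create a multi-arc if an arc $[u]\to[v]$, or a second such suppressed path, is already present, so you would have to rule out these configurations in $G(T)$ specifically. The paper avoids this issue entirely in the converse direction by working inside $U(F(T))$ and using the known isomorphism $U(F(T))\cong T$ to lift the two parallel arcs to a pair of sibling paths $\gamma,\gamma'$ with $T(\gamma)\cong T(\gamma')$, which contradicts the hypothesis directly. Second, your ``key translation step'' (that the subdivision producing $\ep{T}$ acts compatibly on isomorphic subtrees, so that $\ep{T}(v_1)\cong\ep{T}(v_2)$ iff $T(v_1)\cong T(v_2)$ for original vertices) is the right lemma for your route and you correctly flag it as the main obstacle, but it is only sketched; the appeal to $U(F(T))\cong T$ is precisely the device that lets the paper bypass proving it from scratch.
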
  
\begin{proof}
We prove the claim that if $F(T)$ is a
phylogenetic network then there is no pair of distinct
vertices in $T$ with the stated property by establishing the
contrapositive.
Suppose $T$ is a binary MUL-tree on $X$ that contains
two distinct vertices $v$ and $w$
which  share a
parent in $T$ and are such that $T(v)$ and $T(w)$ are 
inextendible. Without loss of generality, we may assume that
$v$ and $w$ are such that there exist no two vertices $v'$ and
$w'$ of $T$ on the directed paths from the root of $T$ to
$v$ and $w$, respectively, such that $T(v')$ and $T(w')$ are 
inextendible and the parent of $v'$ is also the
parent of $w'$. If $T(v)$ is maximal inextendible then $F(T)$
contains a multi-arc and so $F(T)$ is not a phylogenetic network,
as required. So, assume that $T(v)$ is not maximal inextendible.
Then  there must exist some vertex 
$v''$ in $T$ distinct from $v$ and $w$ 
such that $T(v'')$ is maximal inextendible 
and $T(v)$ is a subMUL-tree of $T(v'')$. 

Let $z_0 = v'', z_1, \dots, z_l$,
$l \ge 0$, denote the vertices on the directed path from 
$v''$ to $v$ such that $T(z_i)$ is inextendible and
is rendered maximal inextendible during the folding 
of $T$. Then $z_l = v$ must hold as
every MUL-tree $T(z_i)$, $0 \le i \le l$,
must contain both $T(v)$ and $T(w)$ as a subMUL-tree. 
Thus, $T(v)$ is rendered maximal inextendible at
some stage in the construction of $F(T)$. Applying the 
operation $F$ to $T(v)$ introduces a multi-arc into $F(T)$ 
and thus $F(T)$ is not a phylogenetic network, as required.

Conversely, suppose that $T$ is a binary MUL-tree on $X$ such that 
there exist no two vertices $v,w$ in $T$ which share a
parent in $T$ such that $T(v)$ and $T(w)$ are 
inextendible. Assume for contradiction that $F(T)$ is 
not a phylogenetic network. Then $F(T)$ must contain 
multi-arcs $a$ and $a'$. Put $v=t(a)=t(a')$ and $w=h(a)=h(a')$.
Then $v$ is a tree vertex and $w$ is a reticulation vertex
of $F(T)$. Let $z$ denote the unique child of $w$ in $F(T)$.
Note that since the folding operation implies that
$F(T)$ cannot contain an arc both of whose
end vertices are reticulation vertices, $z$ must 
in fact be a tree vertex in $F(T)$.

Now, let $\gamma, \gamma'$ denote two directed paths from the root 
$\rho_{F(T)}$ of $F(T)$ to $z$ which contain arcs $a$ and $a'$,
respectively, and which differ only on those arcs.  Then 
the subMUL-trees $T(\gamma)$ and $T(\gamma')$ of the MUL-tree $U(F(T))$  
are isomorphic so that, in particular, $T(\gamma)$ and $T(\gamma')$
are inextendible. But this is impossible, since
there is a directed path $\gamma''$ 
from $\rho_{F(T)}$ to $v$ 
such that $\gamma''$ is the parent of both $\gamma$ and
$\gamma'$ in $U(F(T))$ which is isomorphic to $T$. 
Thus, $F(T)$ must be a phylogenetic
network.
%
\epf
\end{proof}

As mentioned above, the folding operation $F$ can be considered
as the reverse of the operation $U$. 
However, there exist phylogenetic networks
$N$ such that $F(U(N))$ is not isomorphic to $N$
(see e.g. Fig.~\ref{fig:fun-counterex}). 
Therefore, it is of interest to understand those
networks $N$ for which $F(U(N))$ and $N$
are isomorphic. 

\begin{figure}[h]
\begin{center}
{\includegraphics[scale=0.8]{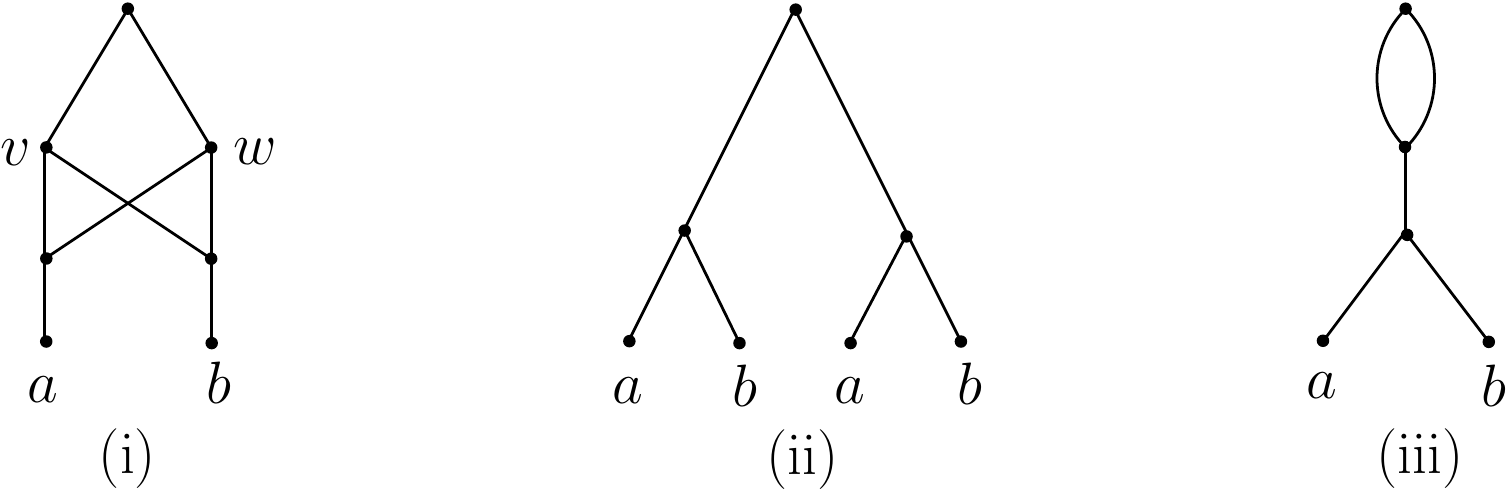}}
\end{center}
\caption{\label{fig:fun-counterex}
(i) A phylogenetic network $N$, 
(ii) $U(N)$, and (iii) the $X$-network
$F(U(N))$. Clearly, $N$ and $F(U(N))$ are not isomorphic.}
\end{figure}

\section{Stable networks}\label{stable}

In this section, we shall give a characterization 
of phylogenetic networks $N$ for which  $F(U(N))$ is isomorphic to $N$.
We call such networks {\em stable}. 

We start by recalling the definition of
an irreducible network \citep{HM06}.
Suppose that $N$ is a 
phylogenetic network on $X$. 
We call two distinct tree vertices 
$v$ and $w$ in $N$ {\em identifiable} if there exist directed 
paths $\gamma_v$ from the root $\rho_N$ of $N$ to $v$ and 
$\gamma_w$ from $\rho_N$ to $w$ such that the subMUL-trees
$T(\gamma_v)$ and $T(\gamma_{w})$ of $U(N)$ are isomorphic. 
In addition, we say that $N$ is {\em irreducible} 
if it does not contain an identifiable pair of tree vertices.
To illustrate, the network $N$ depicted in 
Fig.~\ref{fig:fun-counterex}(i) is not irreducible,
since the two vertices $v$ and $w$ are identifiable.

If $N$ is a phylogenetic network then let $Ret(N)$ denote the
set of reticulation vertices of $N$.
We call $N$ {\em compressed} if the child of each 
vertex in  $Ret(N)$ is a tree vertex.
Note that in \cite{CLRV08}, this property is taken 
as part of the definition of a phylogenetic network, the
rationale being that we cannot expect to reconstruct
the order in which hybridization events occur.

\begin{theorem}\label{thm:stable}
Suppose that $N$ is a semi-resolved
phylogenetic network. Then the following are equivalent.\\
(i) $N$ is stable.\\
(ii) $N$ is compressed and irreducible.\\
(iii) $N$ is compressed and there does not exist a 
pair of distinct tree vertices $v,w$ in $N$ such that $ch(v)=ch(w)$.
\end{theorem}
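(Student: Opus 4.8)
The plan is to establish the chain of equivalences $(iii)\Rightarrow(ii)\Rightarrow(i)\Rightarrow(iii)$, which keeps each implication as elementary as possible. I would begin with $(iii)\Rightarrow(ii)$, which amounts to showing that if $N$ is compressed and has no two distinct tree vertices $v,w$ with $ch(v)=ch(w)$, then $N$ is irreducible. The natural approach is contrapositive: if $N$ has an identifiable pair $v,w$ of tree vertices, witnessed by directed paths $\gamma_v,\gamma_w$ from $\rho_N$ with $T(\gamma_v)\cong T(\gamma_w)$ in $U(N)$, I would take such a pair where $v$ and $w$ are chosen as ``low'' as possible (so that no identifiable pair lies strictly below them on these paths). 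Since $N$ is semi-resolved, every tree vertex has out-degree $2$, so the isomorphism between $T(\gamma_v)$ and $T(\gamma_w)$ induces a pairing of the two children of $v$ with the two children of $w$ via isomorphic sub-pseudo-MUL-trees of $U(N)$. I expect that minimality of the pair $v,w$, together with the fact that the children of reticulation vertices are tree vertices ($N$ compressed), forces the paired children actually to coincide as vertices of $N$ (otherwise a lower identifiable pair would appear), and hence $ch(v)=ch(w)$ — the desired contradiction. Some care is needed here with the possibility that a child is a reticulation vertex, where the induced isomorphism does not immediately say the children are equal; this is where compressedness is used to push down one more level.

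For $(ii)\Rightarrow(i)$, I would argue directly that $U$ is ``injective up to isomorphism'' on compressed irreducible semi-resolved networks, i.e. that $F(U(N))\cong N$. The cleanest route is to trace through the construction of $F$ applied to the MUL-tree $T=U(N)$ and match it against $N$ itself. Each vertex $v$ of $N$ corresponds to the set of directed root-paths ending at $v$, which form a family of vertices of $U(N)$ whose rooted sub-pseudo-MUL-trees are mutually isomorphic; the folding operation identifies exactly such families. Irreducibility guarantees that \emph{distinct} tree vertices of $N$ give rise to \emph{non-isomorphic} sub-pseudo-MUL-trees, so no two tree vertices of $N$ get inadvertently merged by $\sim_{\ep{T}}$, while compressedness ensures the reticulation vertices are handled correctly (no reticulation-to-reticulation arcs, so the intermediate subdivisions in $\ep{T}$ line up). Putting these together, the quotient $G(T)/\!\sim_{\ep{T}}$ followed by suppression of degree-$(1,1)$ vertices returns precisely $N$. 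I would probably cite \citet{HM06} for the fact that $U(F(T))\cong T$ and for the basic structural facts about $\ep{T}$, and use semi-resolvedness to invoke the remark (already in the excerpt) that $F(T)$ is then semi-resolved, keeping us inside the stated class.

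Finally, $(i)\Rightarrow(iii)$ is the easiest: I would prove the contrapositive in two pieces. If $N$ is not compressed, then some reticulation vertex has a reticulation child; but $F(T)$ can never contain an arc between two reticulation vertices (a fact already noted in the proof of Proposition~\ref{prop:phy-network}), so $F(U(N))\not\cong N$. If instead $N$ is compressed but has distinct tree vertices $v,w$ with $ch(v)=ch(w)$, then the root-paths to $v$ and to $w$ yield isomorphic sub-pseudo-MUL-trees of $U(N)$ (they have literally the same subtree hanging below, since $ch(v)=ch(w)$ as vertex sets), so $v$ and $w$ are identifiable; the folding operation will then merge the images of $v$ and $w$, strictly decreasing the number of tree vertices, so again $F(U(N))\not\cong N$. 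Here one should double-check that the merge genuinely happens and is not undone by suppression of degree-$(1,1)$ vertices — but since $N$ is semi-resolved, $v$ and $w$ have out-degree $2$, so neither is a candidate for suppression, and the count argument goes through.

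The main obstacle I anticipate is the careful bookkeeping in $(iii)\Rightarrow(ii)$: translating ``$ch(v)=ch(w)$ as \emph{sets of vertices of $N$}'' into and out of ``isomorphism of sub-pseudo-MUL-trees of $U(N)$'', while correctly accounting for children that are reticulation vertices (where an isomorphism of subtrees below does not by itself equate the children in $N$). The minimality-of-the-pair device combined with compressedness should resolve it, but this is the step where a sloppy argument would fail, so I would write it out in full. The other two implications are largely a matter of unwinding the definitions of $U$, $\ep{T}$, and $F$ and invoking the already-cited results of \citet{HM06}.
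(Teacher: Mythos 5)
Your proposal is correct in outline but routes the equivalences differently from the paper. The paper proves (ii)$\Leftrightarrow$(iii) directly in both directions and then disposes of (i)$\Leftrightarrow$(ii) entirely by citation: (i)$\Rightarrow$(ii) is \citet[Theorem~3]{HM06}, and (ii)$\Rightarrow$(i) follows by applying \citet[Corollary~2]{HM06} to binary resolutions of $N$ and $F(U(N))$. Your cycle (iii)$\Rightarrow$(ii)$\Rightarrow$(i)$\Rightarrow$(iii) coincides with the paper on the combinatorial core: your (iii)$\Rightarrow$(ii) --- take a lowest identifiable pair $v,w$, show the children are forced into $Ret(N)$, use compressedness to descend one level to the tree-vertex grandchildren, and conclude $ch(v)=ch(w)$ --- is exactly the paper's argument, including the caveat you flag about children that are reticulation vertices. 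Your (i)$\Rightarrow$(iii) is a self-contained contrapositive that the paper does not spell out (it imports it from \citet{HM06}); it is sound, and the two facts you lean on (that no $F(T)$ contains an arc joining two reticulation vertices, and that $ch(v)=ch(w)$ forces $v,w$ to be identifiable and hence merged under $\sim_{\ep{T}}$) are both available, the first being noted in the proof of Proposition~\ref{prop:phy-network} and the second being essentially the paper's (ii)$\Rightarrow$(iii) argument.

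The one place where your proposal is substantially thinner than what is needed is (ii)$\Rightarrow$(i). The paper gets this step by citing \citet[Corollary~2]{HM06}; you propose to re-derive it by matching the quotient construction of $F(U(N))$ against $N$ itself. That is a legitimate strategy --- it is essentially what Corollary~\ref{stable-universal} later formalises --- but the clause ``the intermediate subdivisions in $\ep{T}$ line up'' is carrying the entire load. Concretely you must show (a) that $U^*(N)$ is isomorphic to $\ep{[U(N)]}$, i.e.\ that the folding procedure subdivides an arc of $U(N)$ precisely where the corresponding arc of $N$ enters a reticulation vertex and nowhere else, and (b) that $\sim_{\ep{[U(N)]}}$ identifies two root-paths exactly when they end at the same vertex of $N$ (irreducibility gives this for tree vertices; compressedness together with the uniqueness of the parent of a tree vertex gives it for reticulation vertices). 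Neither is difficult given the folding-map machinery of Section~\ref{sec:folding}, but as written this step is an assertion rather than a proof; either cite \citet[Corollary~2]{HM06} as the paper does, or write out (a) and (b) in full.
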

\begin{proof}
(ii) $\Rightarrow$ (iii): Suppose that (ii) holds and 
assume for contradiction that there exists
a pair of distinct tree vertices $v,w$ in $N$ such that $ch(v)=ch(w)$.
Then $ch(v)\subseteq Ret(N)$. Since $N$ is semi-resolved we 
have $|ch(v)|=2$. Let $\{a,b\}=ch(v)$. Since $N$ is compressed 
the children $a'$ and $b'$ of 
$a$ and $b$, respectively, are tree-vertices of $N$. Let $\gamma_{a'}^v$
and  $\gamma_{a'}^w$ denote the directed paths from the root $\rho_N$ of $N$ 
to $a'$ that cross $v$ and $w$, respectively. Similarly, let
$\gamma_{b'}^v$
and  $\gamma_{b'}^w$ denote the directed 
paths in $N$ from $\rho_N$ to $b'$ that cross 
$v$ and $w$, respectively. 
Then the subMUL-trees $T(\gamma_{a'}^v)$ and $T(\gamma_{a'}^w)$ of $U(N)$
are isomorphic and so are the subMUL-trees  $T(\gamma_{b'}^v)$ and 
$T(\gamma_{b'}^w)$. Let $\nu$ denote the subpath obtained from  $\gamma_{a'}^v$
by terminating at $v$. Similarly, let $\mu$ denote the 
subpath obtained from  $\gamma_{a'}^w$
by terminating at $w$. Then the MUL-tree obtained from 
$T(\gamma_{a'}^v)$ and $T(\gamma_{b'}^v)$ by adding the vertex labelled 
$\nu$ and the arcs  $(\nu,\gamma_{a'}^v)$ and $(\nu,\gamma_{b'}^v)$
is $T(\nu)$. Similarly, the  MUL-tree obtained from 
$T(\gamma_{a'}^w)$ and $T(\gamma_{b'}^w)$ by adding the vertex labelled 
$\mu$ and the arcs $(\mu,\gamma_{a'}^w)$ and $(\mu,\gamma_{b'}^w)$
is $T(\mu)$. Since $T(\nu)$ and  $T(\mu)$
are clearly isomorphic it follows that $v$, $w$ is an identifiable pair 
in $N$. Hence $N$ is not irreducible which provides the 
required contradiction.

(iii)  $\Rightarrow$ (ii): Suppose that (iii) holds and assume
for contradiction that $N$ is not irreducible.
Then $N$ contains an identifiable pair of vertices $v,w$. 
Without loss of generality, we may assume that 
$v$ and $w$ are such that there are no vertices
$v'$ and $w'$ below $v$ and $w$, respectively, 
that also form an identifiable pair. 

To obtain the required contradiction, we first claim that 
$ch(v)$ and $ch(w)$ are contained in $Ret(N)$. 
Suppose that $s \in ch(v)$. For all non-root vertices $u$
of $N$ let $\gamma_u$ denote a directed path from
the root $\rho_N$ of $N$ to $u$.
If $s$ is a leaf of $N$ then, since $v$ and $w$ are an
identifiable pair, the MUL-trees $T(\gamma_v)$ 
and $T(\gamma_w)$ are isomorphic and the underlying bijection
is the identity on $X$. Hence, $s\in ch(w)$ holds too and, so, $s\in Ret(N)$
which is impossible as $s$ is a leaf of $N$.

If  $s$ is a non-leaf tree-vertex of $N$
then, since $T(\gamma_v)$ and $T(\gamma_w)$ are isomorphic 
and every tree vertex $z$ of $N$ gives rise to a subset 
of vertices in the MUL-tree $U(N)$, it follows that there exists a
non-leaf tree vertex $s'$ below $w$ such that 
$T(\gamma_s)$ and $T(\gamma_{s'})$ are isomorphic. By the choice
of $v$ and $w$, we cannot have that $s$ and $s'$ form an
identifiable pair and so $s=s'$ must hold. Hence,
$s\in Ret(N)$, which is impossible as $s$ is assumed to be
a tree vertex of $N$. Since every non-root vertex of $N$ is
either a tree-vertex or a reticulation vertex of $N$, it follows
that $ch(v)\subseteq Ret(N)$.
Similar arguments imply that $ch(w)\subseteq Ret(N)$ also holds 
which completes the proof of the claim.

To complete the proof, assume for contradiction that
there exists some $s\in ch(v)-ch(w)$. Then, $s\in Ret(N)$, 
by the previous
claim. Since $N$ is compressed, the child $s'$ of $s$ 
must be a tree-vertex of $N$.
Since $T(\gamma_v)$ and $T(\gamma_w)$ are isomorphic 
it follows that there exists a tree vertex  $r$ in $N$ below $w$ such that
$T(\gamma_{s'})$ and $T(\gamma_r)$ are isomorphic. Note that $s'\not=r$
as otherwise $s'$ must be a reticulation vertex of $N$
which is impossible.
Hence, $s'$ and $r$ form an identifiable pair in $N$
with $s'$ below $v$ and $r$ below $w$ which is impossible
in view of the choice of $v$ and $w$. Thus,  $ch(v)\subseteq ch(w)$.
Similar arguments imply that $ch(w)\subseteq ch(v)$
and so $ch(w)= ch(v)$ must hold, as required.
But this is impossible in view of (iii).

(i) $\Rightarrow$ (ii): This follows by \citet[Theorem 3]{HM06}.

(ii) $\Rightarrow$ (i): Suppose that $N$ is compressed and irreducible.
Let $N^b$ and $F(U(N))^b$ denote some binary resolution 
of $N$ and $F(U(N))$, respectively. Since $N$ is 
irreducible so is $N^b$, 
and since $N$ exhibits $U(N)$ so does $N^b$. 
Hence, by applying \citet[Corollary 2]{HM06} to $N^b$ and $F(U(N))^b$
and using the assumption that $N$ is compressed, 
it follows that $N$ is stable.
\epf
\end{proof}

As an immediate corollary (Corollary~\ref{cor:tree-sibling-stable})
of this last theorem, we 
see that the collection of binary, stable phylogenetic networks 
contains a well-known class of phylogenetic networks. More
specifically, suppose that
$N$  is a  phylogenetic network.
A vertex $w$ of $N$ distinct from some vertex $v$ of $N$ 
is a {\em sibling} of $v$ if $v$ and $w$ share
the same parent, and  a sibling that is a tree vertex is
called a {\em tree-sibling} vertex. In addition, $N$ is
called a {\em tree-child} network if every non-leaf vertex of $N$ has
a child that is a tree vertex of $N$ \citep{C-08}, and $N$ is called 
a {\em tree-sibling} network if every reticulation vertex of $N$ 
has a tree-sibling \citep{CLRV08}. Note that
a tree-child network is a tree-sibling network.  

\begin{corollary}\label{cor:tree-sibling-stable}
Suppose $N$ is a binary compressed tree-sibling network. 
Then  $N$ is stable. 
\end{corollary}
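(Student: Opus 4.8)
The plan is to derive this directly from Theorem~\ref{thm:stable}. Since $N$ is binary, every non-leaf tree vertex of $N$ has out-degree two and $outdeg(\rho_N)=2$, so $N$ is in particular semi-resolved; and $N$ is compressed by hypothesis. Hence, by the equivalence (i)~$\Leftrightarrow$~(iii) in Theorem~\ref{thm:stable}, it suffices to show that $N$ contains no pair of distinct tree vertices $v,w$ with $ch(v)=ch(w)$.

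First I would assume for contradiction that such a pair $v,w$ exists. Because $N$ is binary, $|ch(v)|=|ch(w)|=2$; write $ch(v)=ch(w)=\{a,b\}$ with $a\neq b$. Then $v$ and $w$ are two distinct parents of $a$, so $indeg(a)\geq 2$, and since every non-root vertex of $N$ is either a tree vertex or a reticulation vertex, $a$ must be a reticulation vertex. The same argument applied to $b$ shows that $b$ is a reticulation vertex as well.

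Next I would examine the siblings of $a$. Since $N$ is binary, $a$ has in-degree exactly two, so its parents are precisely $v$ and $w$; and since $v$ and $w$ are non-leaf tree vertices of a binary network, each has exactly two children, namely $a$ and $b$. Therefore the only sibling of $a$ in $N$ is $b$, which we have just shown to be a reticulation vertex. Thus $a$ is a reticulation vertex with no tree-sibling, contradicting the assumption that $N$ is a tree-sibling network. This contradiction establishes condition (iii) of Theorem~\ref{thm:stable}, and hence $N$ is stable.

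I do not anticipate a genuine obstacle here: essentially all the work is already done in Theorem~\ref{thm:stable}, and what remains is the short combinatorial observation above. The only point that needs a little care is the bookkeeping that, in a binary network, a reticulation vertex arising as a common child of the distinct tree vertices $v$ and $w$ has no further parents and that $v,w$ have no further children, so that the sibling set of that reticulation is exactly $\{b\}$; once this is pinned down, the conflict with the tree-sibling condition is immediate.
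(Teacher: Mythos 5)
Your proof is correct and matches the paper's (implicit) argument: the paper states this as an immediate consequence of Theorem~\ref{thm:stable} via the equivalence (i)$\Leftrightarrow$(iii), and the combinatorial step you supply --- that a common child of two distinct tree vertices in a binary network is a reticulation whose only sibling is the other common child, also a reticulation, violating the tree-sibling condition --- is exactly the intended reasoning.
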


Note that there exist semi-resolved, compressed tree-sibling
networks that are not stable (Fig.~\ref{fig:counterexample-new}(i)),
binary, stable phylogenetic networks that are not tree-sibling 
(Fig.~\ref{fig:counterexample-new}(ii)), and 
non-binary, tree-child networks that are not stable
(Fig.~\ref{fig:counterexample-new}(iii)).

\begin{figure}[h]
\begin{center}
\includegraphics[scale=0.8]{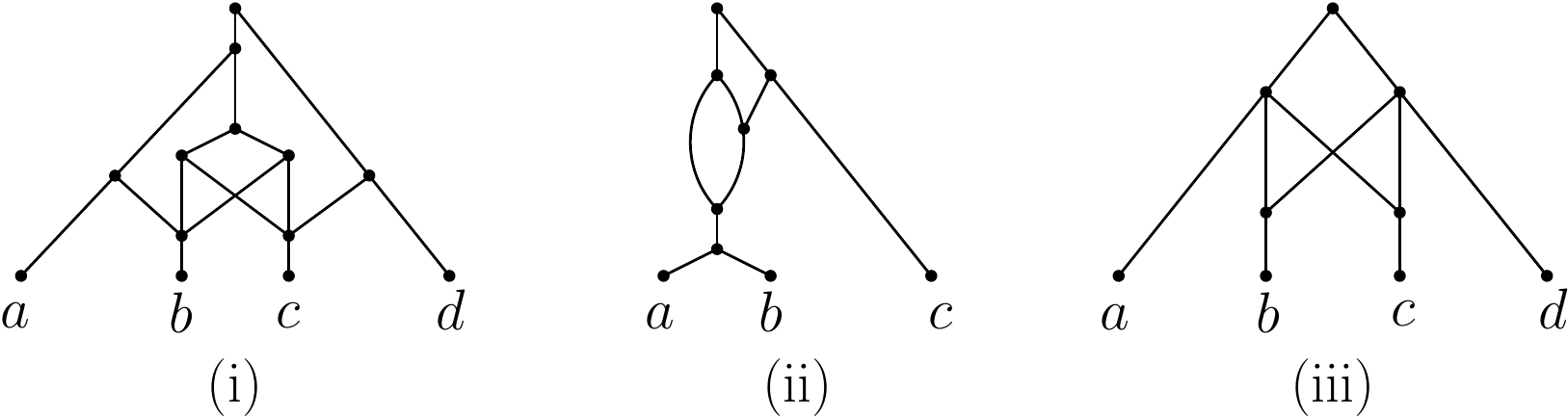}
\end{center}
\caption{\label{fig:counterexample-new}
(i) The network on $X=\{a,b,c,d\}$
is semi-resolved, compressed tree-sibling but not stable. 
(ii) The network on $X=\{a,b,c\}$ 
is binary, stable but not tree-sibling.
(iii) The network on $X=\{a,b,c,d\}$ is non-binary, tree-child
but not stable.
}
\end{figure}

\section{Folding maps}\label{sec:folding}

In this section, we explore a relationship 
between the folding/unfolding operations 
and graph fibrations. For simplicity, we 
shall follow the presentation of the latter 
topic in \cite{BV02}. Results from this section will be used 
to  establish a main result in Section~\ref{sec:weak-display}.

Recall that the head of an arc $a$ in an $X$-network $N$
is denoted  by $h_N(a)$ and its tail by $t_N(a)$.
Now, suppose that $(T,\chi)$ is a pseudo MUL-tree on $X$ and that
$N$ is a phylogenetic network on $X$. An  
{\em $X$-morphism} $f:T \to N$ 
is a pair of 
functions $f_V:V(T) \to V(N)$, $f_A:A(T) \to A(N)$
such that  (i) for all $a \in A(T)$, we have $h_N(f_A(a))=f_V(h_T(a))$
and $t_N(f_A(a))=f_V(t_T(a))$, and (ii) if $v \in L(T)$ with 
$v \in \chi(x)$, $x \in X$, then $f_V(v)=x$.
An $X$-morphism $f$ is called a {\em rooted $X$-morphism} if 
$f_V(\rho_T)=\rho_N$ also holds. 
In case the context is clear, we denote both $f_V$ and $f_A$ by $f$.
We call an $X$-morphism $f:T \to N$ a {\em folding map}\footnote{A 
folding map is analogous to an ``opfibration'' for digraphs 
~\citep[cf.][Definition 4]{BV02}. 
}   
if both maps $f_V, f_A$ are surjective, and 
for each arc $a \in A(N)$ and $v \in V(T)$ such that 
$f(v)=t(a)$ there is a unique arc $\widetilde{a^v} \in A(T)$
(the {\em lifting of the arc $a$ at $v$}) such that $f(\widetilde{a^v})=a$ and
$t(\widetilde{a^v})=v$. 
Note that a folding map is necessarily a rooted $X$-morphism. 
We call the inverse image $f^{-1}(v)$, $v \in V(N)$,
the {\em fibre} over $v$. Informally, the fibre over $v$ 
is the subset of $V(T)$ that is mapped to $v$ under $f$.
For example, for the tree $T^{\dagger}$ and the phylogenetic network $N$ 
depicted in Fig.~\ref{fig:mul-tree}(ii) and (iii), respectively,
the fibers of the vertices $u$, $v$  and $w$ in $N$
are given by the vertices of $T^{\dagger}$ labelled with the same letters.

We begin by stating a result which illustrates how folding maps
naturally arise from the unfolding $U(N)$ of a network $N$.
This result is an analogue of \citet[Theorem 15]{BV02}; the proof is
quite similar and straight-forward and so we omit it.

\begin{theorem}
\label{thm:uni:folding}
Let $N$ be a phylogenetic network on $X$. Then 
the map $f^*:U^*(N) \to N$ that takes each vertex $\pi$ in $U^*(N)$
to its last vertex, and each arc of $U^*(N)$ to the 
corresponding arc in $N$ is a folding map.
\end{theorem}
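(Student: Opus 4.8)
The plan is to verify directly that the map $f^*:U^*(N)\to N$ sending each directed path $\pi$ starting at $\rho_N$ to its terminal vertex, and each arc $\pi\to\pi a$ of $U^*(N)$ to the underlying arc $a$ of $N$, satisfies the three conditions in the definition of a folding map: it is an $X$-morphism, both $f^*_V$ and $f^*_A$ are surjective, and the unique-lifting property holds for every arc of $N$ at every vertex in its fibre.

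First I would check the $X$-morphism conditions. For an arc $a$ of $N$, the corresponding arc of $U^*(N)$ is $\pi\to\pi a$ for some path $\pi$ ending at $t_N(a)$; its head $\pi a$ has terminal vertex $h_N(a)$ and its tail $\pi$ has terminal vertex $t_N(a)$, so $f^*$ respects heads and tails. For the leaf condition, a vertex $\pi$ of $U^*(N)$ labelled by $x\in X$ is by construction a path ending at $x$, so $f^*_V(\pi)=x$; recall that in $U^*(N)$ the leaves are exactly the label-bearing vertices. Hence $f^*$ is an $X$-morphism, and since the empty path at $\rho_N$ maps to $\rho_N$ it is rooted.

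Next I would establish surjectivity. Since $N$ is a connected rooted DAG, every vertex $v$ is reachable from $\rho_N$ by some directed path $\pi$, and that $\pi$ is a vertex of $U^*(N)$ with $f^*_V(\pi)=v$; similarly every arc $a=(u,v)$ of $N$ has $u$ reachable by some path $\pi$, and then $\pi\to\pi a$ is an arc of $U^*(N)$ mapping onto $a$. This gives surjectivity of both $f^*_V$ and $f^*_A$.

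The remaining, and genuinely central, point is the unique-lifting property: given an arc $a\in A(N)$ and a vertex $\pi\in V(U^*(N))$ with $f^*_V(\pi)=t_N(a)$, I must exhibit a unique arc of $U^*(N)$ with tail $\pi$ mapping to $a$. The natural candidate is the arc $\pi\to\pi a$: it has tail $\pi$, and since $\pi$ ends at $t_N(a)$ the concatenation $\pi a$ is a legitimate directed path from $\rho_N$, hence a vertex of $U^*(N)$, so this arc exists and maps to $a$. For uniqueness, any arc of $U^*(N)$ with tail $\pi$ has the form $\pi\to\pi b$ for an arc $b$ of $N$ with $t_N(b)$ equal to the terminal vertex of $\pi$, namely $t_N(a)$; if it maps to $a$ then $b=a$, so the arc is $\pi\to\pi a$. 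I expect this last uniqueness bookkeeping — keeping straight the correspondence between arcs out of $\pi$ in $U^*(N)$ and arcs out of $t_N(a)$ in $N$, and making sure the description is phrased at the level of $U^*(N)$ before any suppression of degree-$(1,1)$ vertices — to be the only subtle part; everything else is routine. Since the paper explicitly states the proof is similar to that of \citet[Theorem 15]{BV02} and omits it, a short argument along these lines suffices.
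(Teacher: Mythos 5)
Your proof is correct and is precisely the direct verification that the paper declares ``straight-forward'' and omits (citing the analogy with Theorem~15 of \citet{BV02}): you check the $X$-morphism conditions, surjectivity of $f^*_V$ and $f^*_A$ (using that every vertex of a rooted DAG is reachable from $\rho_N$), and the unique arc-lifting property via the bijection $a \mapsto (\pi,\pi a)$ between arcs of $N$ out of $t_N(a)$ and arcs of $U^*(N)$ out of $\pi$. No gaps; this matches the intended argument.
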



As we shall now show, the folding $F(T)$ of a 
MUL-tree $T$ can also give rise to
a folding map. In analogy with \citet[p.\,25]{BV02}, we
say that an equivalence relation $\sim$ on the vertex set 
$V(T)$ of a pseudo MUL-tree $T$ satisfies the
{\em local out-isomorphism property (LOIP)} if the following holds
for all $v,w \in V(T)$.
\begin{itemize}
\item {\em LOIP:} If $v \sim w$ then there is a bijection
$\xi$ from the set of arcs in $T$ with tail $v$ to the set
of arcs in $T$ with tail $w$ such that $h(a) \sim h(\xi(a))$,
for all arcs $a$ in $T$ with tail $v$.
\end{itemize}

We now use the LOIP-property 
to characterize when equivalence relations on MUL-trees
give rise to folding maps~\citep[cf.][Theorem~2]{BV02}. 
To aid clarity of presentation, we denote the parent of a non-root 
vertex $v$ in a rooted directed tree by $pa(v)$.

\begin{theorem}\label{relations}
Suppose that $(T,\chi)$ is a pseudo MUL-tree 
on $X$, and $\sim$ is an equivalence relation on $V(T)$. Then the equivalence 
classes $[.]_\sim$ are the fibres of a folding map $f:T \rightarrow N$ 
(for some phylogenetic network $N$ on $X$) if and only if $\sim$ 
satisfies the following five properties:

(i) LOIP,

(ii) for all $v \in V(T)$ with in-degree and out-degree 1, 
$|[v]_{\sim}| \ge 2$, 
and $pa(v) \not\sim pa(v'')$ for some $v'' \in [v]_{\sim}$, 

(iii) for all $v\in V(T)$ with in-degree 1 and out-degree not 
equal to 1, 
$pa(v) \sim pa(v')$ for all $v' \in [v]_{\sim}$,

(iv) for all $x \in X$ and $v \in \chi(x)$, 
$[v]_{\sim} = \chi(x)$, and 

(v) for all $v\in V(T) - \{\rho_T\}$, 
$pa(v) \not = pa(v')$ for all $v' \in [v]_\sim$ distinct from $v$.

\end{theorem}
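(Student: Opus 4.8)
The plan is to prove both implications of the ``if and only if''. For the forward direction, I would assume $f:T\to N$ is a folding map whose fibres are the classes $[\cdot]_\sim$, and verify each of (i)--(v) directly from the defining properties of a folding map. Property (i) LOIP should follow from the arc-lifting condition: if $v\sim w$, i.e. $f(v)=f(w)$, then the arcs out of $f(v)$ lift uniquely at $v$ and at $w$, giving the required bijection $\xi$ with $f(h(a))=f(h(\xi(a)))$, hence $h(a)\sim h(\xi(a))$. Property (iv) should follow from condition (ii) in the definition of $X$-morphism together with surjectivity: all leaves in $\chi(x)$ map to $x\in X\subseteq V(N)$, and conversely every vertex mapping to $x$ must be a leaf labelled $x$ (since $N$ is a phylogenetic network and $f$ respects arcs, so the fibre over a leaf consists of leaves). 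Properties (ii), (iii) and (v) should come from the fact that $N$, being a phylogenetic network, has no in-degree-one/out-degree-one vertices and no multi-arcs: a vertex $v\in V(T)$ with in-degree and out-degree $1$ must map to a vertex of $N$ with in-degree or out-degree $\ge 2$ (so its fibre, which also contains a vertex ``reached differently'', has size $\ge 2$ and the parents cannot all be equivalent, giving (ii)), a vertex with out-degree $\ne 1$ maps to a genuine tree or reticulation vertex whose unique way of being reached forces all parents equivalent (iii), and the no-multi-arc condition forces (v) — two siblings in $T$ with $v\sim v'$ would map to two parallel arcs in $N$.

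For the converse, I would start from an equivalence relation $\sim$ satisfying (i)--(v) and construct $N$ as the quotient: let $V(N)=V(T)/\!\sim$, and put $m$ arcs from $[u]$ to $[v]$ where $m=|ch(u')\cap[v]_\sim|$ for any (hence, by LOIP, every) representative $u'\in[u]$ — this is exactly the $G(T)$-style construction from the folding definition. The natural projection $f:T\to G$ is then an $X$-morphism by (iv), it is surjective by construction, and the arc-lifting property holds because of LOIP: given an arc $a=([u],[v])$ of $G$ and $w\in[u]$, the $m$ arcs out of $[u]$ into $[v]$ correspond bijectively (via $\xi$) to $m$ arcs out of $w$ into $[v]$, and one checks the lift at $w$ is unique. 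The remaining work is to check $G$ (or rather the $X$-network obtained after suppressing degree-$(1,1)$ vertices) is a \emph{phylogenetic} network: acyclicity is inherited from $T$; the root $[\rho_T]$ has in-degree zero; property (v) guarantees no multi-arcs (two distinct children of $w$ cannot be equivalent, so $m\le 1$ when... — more carefully, (v) prevents two siblings from collapsing, which is precisely what would create a parallel pair); and properties (ii) and (iii) are exactly what is needed so that the suppression of in/out-degree-one vertices in $T$ is compatible with the suppression in $G$, i.e. a vertex of $T$ survives suppression iff its image does, so that the fibres of the induced folding map $U(N)\,(\cong T$ after suppression$)\to N$ are still the $\sim$-classes. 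Condition (iii) ensures no interior tree vertex of $N$ wrongly has in-degree one and out-degree one; condition (ii) ensures the converse, that genuine subdivision vertices of $T$ do get identified rather than surviving spuriously.

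I expect the main obstacle to be the bookkeeping around vertices of in-degree and out-degree one and the suppression operations — conditions (ii) and (iii) are precisely the delicate hypotheses, and getting the equivalence between ``fibres before suppression'' and ``fibres after suppression'' exactly right (in both directions) is where the argument is fiddly rather than conceptual. In particular I would need to argue that a $(1,1)$-vertex $v$ of $T$ with $|[v]_\sim|\ge 2$ and $pa(v)\not\sim pa(v'')$ maps into a \emph{non-suppressed} edge of $N$, so that $v$ is legitimately suppressed on both sides, whereas a $(1,1)$-vertex violating (ii) would have to be retained in $N$ — contradicting that $N$ is an $X$-network. A secondary point to be careful about is verifying LOIP is genuinely needed (not just convenient) to make the arc count $m=|ch(u')\cap[v]_\sim|$ well-defined independent of the representative $u'$, which is what lets the quotient graph $G(T)$ even be defined. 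Once these suppression-compatibility lemmas are in place, both directions close up cleanly, and I would also remark that this theorem is the expected analogue of \citet[Theorem~2]{BV02}, with conditions (ii)--(v) being the extra constraints coming from the phylogenetic setting (no multi-arcs, no degree-$(1,1)$ vertices, fixed leaf labels).
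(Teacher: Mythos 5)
Your overall strategy coincides with the paper's: in one direction verify (i)--(v) directly from the definition of a folding map, and in the other build $N$ as the quotient $T/\!\sim$ with the natural projection. The forward direction as you sketch it is essentially the paper's argument (LOIP from unique arc-lifting, (iv) from the labelling condition on $X$-morphisms together with surjectivity, (ii) and (iii) from the absence of $(1,1)$-vertices and the tree/reticulation dichotomy in $N$, and (v) from the absence of multi-arcs), and the paper itself dispatches most of these checks as ``straightforward''.

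The one genuine wobble is your treatment of degree-$(1,1)$ vertices in the converse. You propose to form the quotient $G$, then pass to ``the $X$-network obtained after suppressing degree-$(1,1)$ vertices'', prove a ``suppression compatibility'' statement relating suppression in $T$ to suppression in $G$, and end with a folding map $U(N)\to N$. That is not what the theorem asserts and would not prove it: the folding map must go from the given pseudo MUL-tree $T$ itself (whose vertex set carries $\sim$) to $N$, so every arc of $T$ must map to a single arc of $N$; no vertex of $T$ may be suppressed, and if any vertex of $G$ were suppressed the projection would cease to be an $X$-morphism whose fibres are the $\sim$-classes. The correct reading of (ii) and (iii) is that they make suppression unnecessary. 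By (v), no two children of a vertex are equivalent, so together with LOIP the out-degree of $[v]$ in $T/\!\sim$ equals the out-degree of $v$; condition (ii) then forces every $(1,1)$-vertex of $T$ to acquire in-degree at least $2$ in the quotient (it becomes a reticulation vertex), while (iii) forces every other non-root vertex to keep in-degree exactly $1$ (it remains a tree vertex or a leaf). Hence $T/\!\sim$ is already a phylogenetic network and the projection is the desired folding map. Once you replace your suppression step by this observation, your argument closes up and is the same as the paper's.
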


\begin{proof}
Assume first that
$N$ is a phylogenetic network on $X$ and $f:T\to N$ is a folding map
such that the equivalence classes $[.]_\sim$ of $\sim$ are the fibres of $f$.
For each $v,w \in V(T)$ with $v \sim w$, define a 
map $\xi$ from the set of arcs $a$ in $T$ with tail $v$ to the set
of arcs in $T$ with tail $w$ by putting $\xi(a)$ equal to 
$\widetilde{f(a)^w}$. Then $f(h(\xi(a)))=f(h(\widetilde{f(a)^w}))
=h(f(\widetilde{f(a)^w}))=h(f(a))=f(h(a))$. Hence 
$h(\xi(a)) \sim h(a)$, and so $\sim$ satisfies (i). 
Moreover, as $N$ is a phylogenetic network, it is straight-forward
to check that (ii) must hold as no vertex in $N$ can have 
in-degree and out-degree 1,  (iii) must hold as every 
vertex of $N$ that is not the root of $N$ 
is either a reticulation vertex or a tree vertex (but not both), 
and that (iv) must hold as all elements in $\chi(x)$ must 
be mapped by $f$ to a vertex labeled by $x$ which has in-degree 1. 
Finally, (v) follows from the fact that $N$ does not contain multi-arcs.

Conversely, assume that $\sim$ is an equivalence relations on $V(T)$
that satisfies properties (i)-(v).
 To simplify notation,  put $[u]=[u]_\sim$ for all 
vertices $u$ in $V(T)$. Let $T/\!\sim$ be the network
obtained by taking the quotient of $T$ by $\sim$ (as described in 
Section~\ref{sec:two-constructions}). In particular,  
$T/\!\sim$ is a rooted DAG with vertex 
set $V(T)/\!\sim$, and 
$([u],[v])$ an arc in $T/\!\sim$ for 
$u,v \in V(T)$  
if and only if $(u',v') \in A(T)$ for some 
$u'\in [u]$ and $v'\in [v]$ (note that this 
definition is independent of the choice of $u'$ and $v'$). 
In addition, we identify each leaf $[u]$ 
in $T/\!\sim$ with the necessarily unique 
element $x$ in $X$ with $[u]=\chi(x)$ 
whose existence follows from property (iv). 
It is straight-forward to check that properties (i)--(v) ensure 
that  $T/\sim$ is a phylogenetic network on $X$.

Now, define $f:T \to T/\sim$ to be the $X$-morphism 
that maps each vertex $u$ in $V(T)$ to its equivalence 
class $[u]$, and each arc $(u,v)$ in $A(T)$ to the arc $([u],[v])$. 
It is  straight-forward to check that $f$ is a folding map 
as properties (i) and (iv) imply that $f$ yields a well-defined 
surjective $X$-morphism from $T$ to $T/\sim$ that satisfies 
the aforementioned arc lifting property.
\epf
\end{proof}

Given a MUL-tree $T$, consider the 
equivalence relation $\sim_{\ep{T}}$ on the vertex set $V(\ep{T})$ 
of the pseudo MUL-tree $\ep{T}$ defined in 
Section~\ref{sec:two-constructions}. Since 
$\sim_{\ep{T}}$ satisfies properties (i) -- (v) of the last theorem
it follows that, in case $F(T)$ is a phylogenetic network, 
we obtain a folding map $\ep{T} \to F(T)=\ep{T}/\sim_{\ep{T}}$ 
whose fibres are the equivalence classes 
of $\sim_{\ep{T}}$.


We now state a result that provides additional insight
into unfoldings of networks, and that will also be useful in 
the last section. It can be regarded
as a phylogenetic analogue of path lifting 
in topology~\citep[cf. also][Theorem 13 and Corollary 14]{BV02}. 

\begin{theorem} \label{lift}
Suppose that $T$ and $T'$ are pseudo MUL-trees on $X$, 
that $N$ is a phylogenetic network on $X$ and that
$g:T' \to N$ is an $X$-morphism. 
If $f:T \to N$ is a folding map, then there 
exists an $X$-morphism 
$\tilde{g}:T' \to T$ such that $f \circ \tilde{g} = g$.
Moreover, if $g$ is a rooted $X$-morphism, 
then so is $\tilde{g}$, and $\tilde{g}$  is necessarily unique.
\end{theorem}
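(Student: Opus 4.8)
The plan is to construct $\tilde g$ by a recursive "lifting" procedure that mimics the classical path-lifting argument in covering space theory, using the arc-lifting property guaranteed by the folding map $f$. First I would set $\tilde g_V(\rho_{T'})$ to be any vertex in the fibre $f^{-1}(g_V(\rho_{T'}))$ — or, in the rooted case, forced to be $\rho_T$ since $f$ is a rooted $X$-morphism — and then process the vertices of $T'$ in order of increasing distance from $\rho_{T'}$. For a non-root vertex $u' \in V(T')$ with parent $p' = pa(u')$ already handled, the incoming arc $a' = (p',u')$ of $T'$ maps under $g$ to an arc $a = g_A(a') \in A(N)$ with tail $t_N(a) = g_V(p') = f_V(\tilde g_V(p'))$; the folding-map property then supplies the unique lifting $\widetilde{a^{\tilde g_V(p')}} \in A(T)$ of $a$ at $\tilde g_V(p')$, and I would set $\tilde g_A(a') := \widetilde{a^{\tilde g_V(p')}}$ and $\tilde g_V(u') := h_T(\tilde g_A(a'))$. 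Since $T'$ is a tree, every non-root vertex has exactly one incoming arc, so this assignment is well-defined and visits every vertex and arc exactly once.

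Next I would verify that the pair $(\tilde g_V, \tilde g_A)$ so constructed is an $X$-morphism satisfying $f \circ \tilde g = g$. The head/tail compatibility (condition (i) of the definition of $X$-morphism) for $\tilde g$ holds by construction: the tail of $\tilde g_A(a')$ is $\tilde g_V(p')$ and its head is $\tilde g_V(u')$ by definition. The relation $f_A \circ \tilde g_A = g_A$ on arcs is immediate since $f_A(\widetilde{a^v}) = a$ by the definition of a lifting; and $f_V \circ \tilde g_V = g_V$ then follows on all non-root vertices by taking heads of arcs (and on $\rho_{T'}$ by the base case). For condition (ii) of the $X$-morphism definition, if $u' \in L(T')$ with $u' \in \chi'(x)$, then $g_V(u') = x$ because $g$ is an $X$-morphism, and $f_V(\tilde g_V(u')) = g_V(u') = x$; since $f$ is an $X$-morphism, any vertex of $T$ mapped by $f$ to $x \in X \subseteq V(N)$ must itself be a leaf labelled $x$ (the fibre over a leaf $x$ of $N$ consists of leaves labelled $x$, by property (iv) in Theorem~\ref{relations} / the definition of $X$-morphism applied to leaves), so $\tilde g_V(u') \in \chi(x)$, giving $\tilde g_V(u') = x$ as required. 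This also shows $\tilde g$ is rooted exactly when $g$ is.

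For uniqueness in the rooted case, suppose $\tilde g_1, \tilde g_2 : T' \to T$ are two rooted $X$-morphisms with $f \circ \tilde g_i = g$. They agree on $\rho_{T'}$ since both send it to $\rho_T$. Arguing by induction on distance from the root: if they agree at $p' = pa(u')$, then both $\tilde g_1{}_A(a')$ and $\tilde g_2{}_A(a')$ (where $a' = (p',u')$) are arcs of $T$ with tail $\tilde g_1{}_V(p') = \tilde g_2{}_V(p')$ that map under $f_A$ to $g_A(a')$; by the uniqueness clause in the arc-lifting property of the folding map $f$, they must be equal, hence $\tilde g_1$ and $\tilde g_2$ agree on $a'$ and on $u' = h_{T'}(a')$. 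Since $T'$ is connected and rooted, this covers all vertices and arcs.

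I expect the main obstacle to be purely bookkeeping rather than conceptual: one must be careful that the recursion is genuinely well-founded (the acyclicity and tree structure of $T'$ make "distance from $\rho_{T'}$" a valid induction parameter, and the single-incoming-arc property of trees is what makes the per-vertex lift canonical), and one must pin down precisely why a vertex of $T$ mapped by a folding map into $X \subseteq V(N)$ is forced to be the correspondingly labelled leaf of $T$ — this is the only place where properties of $X$-morphisms beyond arc-lifting are used, and it is what makes $\tilde g$ a genuine $X$-morphism and not merely a digraph morphism. In the non-rooted existence statement there is a harmless choice of basepoint in the fibre $f^{-1}(g_V(\rho_{T'}))$, which is why uniqueness is only claimed under the rooted hypothesis.
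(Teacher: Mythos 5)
Your proposal is correct and follows essentially the same route as the paper's proof: a top-down recursive construction of $\tilde{g}$ using the unique arc-lifting property of the folding map, with the rooted case forcing the base point and yielding uniqueness. You simply spell out the verification steps (the $X$-morphism conditions and the inductive uniqueness argument) that the paper leaves as ``straightforward to check.''
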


\begin{proof}
Using a top-down approach, we
define $\tilde{g}$ recursively as follows. 
Since $f$ is a folding map, there exists a 
vertex $u$ in $f^{-1}(g(\rho_{T'}))$.  We set 
$\tilde{g}(\rho_{T'})= u$. Now, if the map 
$\tilde{g}$ has been defined on the parent $v'$ of 
some $v \in V(T')$ as well as the arcs and vertices on the directed path
from $\rho_{T'}$ to $v'$, 
and $a=(v',v) \in A(T')$, then 
we define 
$\tilde{g}(a)=\widetilde{g(a)^{\tilde{g}(v')}}$, and
$\tilde{g}(v)$ to be the head of this arc in $T$.
It is straight-forward to check that the mapping $\tilde{g}$ that 
we obtain in this way yields an  $X$-morphism
with the desired property. Moreover, if $g$ is a rooted $X$-morphism, 
then  $\rho_N=g(\rho_{T'})$ and hence $f^{-1}(g(\rho_{T'}))=\{\rho_T\}$. 
This implies that $\tilde{g}$ is a rooted $X$-morphism,
and that $\tilde{g}$ is the only such map. 
\epf
\end{proof}

As a corollary of this result, we now see that the pseudo MUL-tree $U^*(N)$ 
can be regarded as a phylogenetic analogue of 
the {\em universal total graph} of $N$ (at $\rho_N$),
a graph theoretical variant of the universal cover 
of a topological space~\citep[cf.][Section 3.1]{BV02}. 

\begin{corollary} \label{unique}
Suppose that $T'$ is a pseudo MUL-tree
and $N$ is a phylogenetic network, both on $X$, and that
$g:T' \to N$ is a folding map. Then $T'$ is isomorphic to $U^*(N)$.
\end{corollary}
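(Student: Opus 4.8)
The plan is to use the universal property of the unfolding established in Theorem~\ref{thm:uni:folding} together with the lifting result Theorem~\ref{lift} to produce mutually inverse $X$-morphisms between $T'$ and $U^*(N)$. Write $f^*:U^*(N)\to N$ for the canonical folding map of Theorem~\ref{thm:uni:folding}. Since both $f^*$ and $g$ are folding maps, they are in particular rooted $X$-morphisms, so Theorem~\ref{lift} applies in both directions: lifting the rooted $X$-morphism $g:T'\to N$ along the folding map $f^*$ yields a unique rooted $X$-morphism $\tilde g:T'\to U^*(N)$ with $f^*\circ\tilde g=g$, and lifting the rooted $X$-morphism $f^*:U^*(N)\to N$ along the folding map $g:T'\to N$ yields a unique rooted $X$-morphism $h:U^*(N)\to T'$ with $g\circ h=f^*$.

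The key step is then to show that $\tilde g$ and $h$ are mutually inverse. Consider the composite $h\circ\tilde g:T'\to T'$. It is a rooted $X$-morphism (composition of rooted $X$-morphisms), and it satisfies $g\circ(h\circ\tilde g)=(g\circ h)\circ\tilde g=f^*\circ\tilde g=g$. Thus $h\circ\tilde g$ is a rooted $X$-morphism from $T'$ to $T'$ lifting $g$ along $g$ itself. But the identity map $\mathrm{id}_{T'}$ is also such a lift, and the uniqueness clause of Theorem~\ref{lift} (applied with the folding map $g:T'\to N$ and the rooted $X$-morphism $g:T'\to N$) forces $h\circ\tilde g=\mathrm{id}_{T'}$. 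Symmetrically, $\tilde g\circ h:U^*(N)\to U^*(N)$ is a rooted $X$-morphism with $f^*\circ(\tilde g\circ h)=(f^*\circ\tilde g)\circ h=g\circ h=f^*$, so by the uniqueness clause applied to the folding map $f^*$ we get $\tilde g\circ h=\mathrm{id}_{U^*(N)}$. Hence $\tilde g$ is a bijection on vertices and on arcs with inverse $h$.

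Finally I would check that a bijective $X$-morphism whose inverse is also an $X$-morphism is an isomorphism of pseudo MUL-trees in the sense defined in Section~\ref{definitions}: the vertex map is a digraph isomorphism because both it and its inverse preserve heads and tails of arcs (condition (i) of being an $X$-morphism, in both directions), and the leaf-labelling condition is respected because each of $\tilde g$ and $h$ sends a leaf in $\chi(x)$ to a leaf labelled $x$, which since the maps are inverse bijections means $v\in\chi_{T'}(x)$ iff $\tilde g(v)\in\chi_{U^*(N)}(x)$. Therefore $T'\cong U^*(N)$.

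The only real subtlety — the place I would be most careful — is making sure that the hypotheses of Theorem~\ref{lift} are genuinely met each time it is invoked: in particular that a folding map really is a rooted $X$-morphism (stated in the paragraph defining folding maps) so that the uniqueness conclusion of Theorem~\ref{lift} is available, and that the relevant ``source'' map in each application is a folding map (not merely an $X$-morphism) so that lifting is possible at all. Once those bookkeeping points are in place, the argument is a formal diagram-chase of exactly the kind used to prove uniqueness of universal objects, and there is no computational obstacle.
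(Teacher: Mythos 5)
Your argument is correct, and it reaches the conclusion by a different final step than the paper. Both proofs begin identically: apply Theorem~\ref{lift} to lift $g$ along the canonical folding map $f^*:U^*(N)\to N$ of Theorem~\ref{thm:uni:folding}, obtaining $\tilde g:T'\to U^*(N)$ with $f^*\circ\tilde g=g$. The paper then finishes in one line by observing that, since $g$ is a folding map, $\tilde g$ is itself a folding map, and a folding map between pseudo MUL-trees is an isomorphism; this is short but leaves two small verifications implicit (that the folding-map property transfers to $\tilde g$, and that surjectivity plus unique arc-lifting forces bijectivity when source and target are trees). You instead run the standard uniqueness-of-universal-objects diagram chase: lift $f^*$ along $g$ to get $h:U^*(N)\to T'$, then use the uniqueness clause of Theorem~\ref{lift} (applied to the identity lifts of $g$ along $g$ and of $f^*$ along $f^*$) to conclude $h\circ\tilde g=\mathrm{id}_{T'}$ and $\tilde g\circ h=\mathrm{id}_{U^*(N)}$. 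This buys you a proof that uses Theorem~\ref{lift} purely as a black box (existence plus uniqueness of rooted lifts) and never has to examine the structure of $\tilde g$ directly, at the modest cost of invoking the theorem three more times and of tacitly extending the notion of (rooted) $X$-morphism to maps between two pseudo MUL-trees — an extension the paper itself already makes implicitly, since the lift in Theorem~\ref{lift} is such a map. Your closing remarks correctly identify the only bookkeeping points that need care (that folding maps are rooted $X$-morphisms, so the uniqueness clause is available, and that the map lifted along is genuinely a folding map in each application), and none of them causes a problem.
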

\begin{proof}
Applying Theorem~\ref{lift} with $T=U^*(N)$ and $f=f^*:U^*(N) \to N$, 
it follows that 
there exists a unique rooted $X$-morphism $\tilde{g}:T' \to U^*(N)$
with $f \circ \tilde{g} = g$.
Since $g$ is a folding map, it follows 
that $\tilde{g}$ is also a folding map, and hence 
an isomorphism, as required. 
\epf
\end{proof}

Using again the notation for a guidetree for the operation $F$,
we now use this last result to provide an
alternative characterisation for stable networks.

\begin{corollary}\label{stable-universal}
Suppose that $N$ is a phylogenetic network. 
Then $N$ is stable if and 
only if $U^*(N)$ is isomorphic to $\ep{[U(N)]}$.
\end{corollary}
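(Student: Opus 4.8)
The plan is to prove the biconditional by moving back and forth between the network-level statement (``$N$ is stable'', i.e.\ $F(U(N))\cong N$) and the guidetree-level statement ($U^*(N)\cong\ep{[U(N)]}$), using the folding maps produced earlier as the bridge. Throughout, write $T=U(N)$, so that $F(T)=F(U(N))$ and $\ep{T}=\ep{[U(N)]}$, and recall from the discussion following Theorem~\ref{relations} that the equivalence relation $\sim_{\ep{T}}$ satisfies (i)--(v), so (when $F(T)$ is a phylogenetic network) there is a folding map $q:\ep{T}\to F(T)=\ep{T}/\!\sim_{\ep{T}}$ whose fibres are the classes of $\sim_{\ep{T}}$. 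Also recall $U(F(T))\cong T$ always, and that Corollary~\ref{unique} says: whenever there is a folding map from a pseudo MUL-tree $T'$ onto a phylogenetic network $M$, then $T'\cong U^*(M)$.

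For the forward direction, suppose $N$ is stable, so $F(T)\cong N$ and in particular $F(T)$ is a phylogenetic network. Then the folding map $q:\ep{T}\to F(T)$ exists, and composing with the isomorphism $F(T)\cong N$ gives a folding map $\ep{T}\to N$. By Corollary~\ref{unique} (applied with $T'=\ep{T}$, $M=N$, $g=$ this composite), we conclude $\ep{T}\cong U^*(N)$, which is exactly $U^*(N)\cong\ep{[U(N)]}$.

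For the converse, suppose $U^*(N)\cong\ep{[U(N)]}=\ep{T}$. The folding map $f^*:U^*(N)\to N$ of Theorem~\ref{thm:uni:folding}, transported across this isomorphism, gives a folding map $\ep{T}\to N$. Now I want to identify the fibres of this folding map with the classes of $\sim_{\ep{T}}$, i.e.\ show that the partition of $V(\ep{T})$ it induces is exactly $V(\ep{T})/\!\sim_{\ep{T}}$; that would force $N\cong\ep{T}/\!\sim_{\ep{T}}=F(T)$ (in particular $F(T)$ is a genuine phylogenetic network here), giving stability. The natural way to pin this down is: two vertices $v,w$ of $\ep{T}$ lie in the same fibre of \emph{any} folding map onto a phylogenetic network precisely when $\ep{T}(v)\cong\ep{T}(w)$ as pseudo MUL-trees — the ``only if'' because a folding map restricts to an isomorphism of the out-subtrees hanging below $v$ and below $w$ (iterate the arc-lifting bijection top-down, as in the proof of Theorem~\ref{relations}), and the ``if'' because a folding map, being the universal object of Corollary~\ref{unique}, cannot separate two vertices with isomorphic descendant-subtrees without contradicting uniqueness/surjectivity. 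Since $\sim_{\ep{T}}$ is by definition the relation ``$\ep{T}(v)\cong\ep{T}(w)$'', the fibres of $\ep{T}\to N$ coincide with its classes, and we are done.

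The main obstacle I expect is the last step of the converse: cleanly arguing that the fibres of the folding map $\ep{T}\to N$ are \emph{forced} to be the $\sim_{\ep{T}}$-classes rather than some coarsening or refinement. A refinement is impossible because the map is surjective and the classes of $\sim_{\ep{T}}$ are ``indivisible'' (LOIP together with property (v) — distinct vertices in a class have distinct parents — prevents any folding map from splitting a class while remaining arc-lifting and multigraph-free). A strict coarsening is impossible because merging two $\sim_{\ep{T}}$-classes with non-isomorphic out-subtrees violates the arc-lifting bijection of the folding-map definition. Making both halves rigorous amounts to a careful top-down induction matching out-subtrees, and it is essentially the same bookkeeping already carried out in the proof of Theorem~\ref{relations}; so the cleanest writeup is probably to isolate the statement ``for a folding map $T'\to M$ onto a phylogenetic network, $v$ and $w$ share a fibre iff $T'(v)\cong T'(w)$'' as a small lemma (or a remark invoking Corollary~\ref{unique}), and then both directions of Corollary~\ref{stable-universal} fall out immediately.
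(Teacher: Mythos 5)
Your forward direction is exactly the paper's argument: stability yields a folding map $\ep{[U(N)]}\to F(U(N))\cong N$, and Corollary~\ref{unique} then forces $\ep{[U(N)]}\cong U^*(N)$. That half is correct and complete.

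The converse, however, has a genuine gap, and it sits precisely where you flagged ``the main obstacle''. The lemma you propose to isolate --- for a folding map $f:T'\to M$ onto a phylogenetic network, $v$ and $w$ share a fibre if and only if $T'(v)\cong T'(w)$ --- is false in its ``if'' half. Take the network $N$ of Fig.~\ref{fig:fun-counterex}(i), or any compressed semi-resolved network with two distinct tree vertices $v,w$ satisfying $ch(v)=ch(w)\subseteq Ret(N)$ (the obstruction in Theorem~\ref{thm:stable}(iii)): the folding map $f^*:U^*(N)\to N$ of Theorem~\ref{thm:uni:folding} puts the paths ending at $v$ and the paths ending at $w$ into \emph{different} fibres, even though their descendant pseudo-subtrees in $U^*(N)$ are isomorphic (both are the unfolding of the common sub-network below $ch(v)$). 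So a folding map onto a phylogenetic network can perfectly well split a class of vertices with isomorphic out-subtrees: the fibre relation of \emph{any} folding map satisfies (i)--(v) of Theorem~\ref{relations}, and neither LOIP nor property~(v) prevents it from strictly refining the relation ``$T'(v)\cong T'(w)$''; likewise Corollary~\ref{unique} pins down the \emph{domain} of a folding map up to isomorphism, not its fibres, so the ``universal object cannot separate two such vertices'' argument does not go through. Your ``coarsening is impossible'' half (each fibre is contained in an isomorphism class) is fine; it is the ``refinement is impossible'' half that fails as a general statement, and your sketch never invokes the hypothesis $U^*(N)\cong\ep{[U(N)]}$ beyond transporting $f^*$ --- yet this is exactly the point at which that hypothesis must do real work. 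The paper closes the argument by using the assumed isomorphism to identify the fibre relation $\sim^*$ of $f^*$ with the guidetree's defining relation $\sim_{\ep{[U(N)]}}$ (whose classes are, by construction, the full isomorphism classes of descendant pseudo-subtrees of $\ep{[U(N)]}$), and then reading off $N\cong U^*(N)/\!\sim^*\,\cong\,\ep{[U(N)]}/\!\sim_{\ep{[U(N)]}}\,\cong F(U(N))$; to repair your write-up you would need to prove that identification using the specific structure of $\ep{[U(N)]}$, not a general property of folding maps.
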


\begin{proof}
Suppose $N$ is stable, that is, $N$ is isomorphic to $F(U(N))$. 
By the comment following Theorem~\ref{relations},
there exists a folding map from the pseudo MUL-tree $\ep{[U(N)]}$ 
to $F(U(N))$. As $N$ is isomorphic to $F(U(N))$, there also 
exists a folding map from $U^*(N)$ to $N$. 
By Corollary~\ref{unique}, it follows that 
$U^*(N)$ is isomorphic to $\ep{[U(N)]}$.

Conversely, suppose $U^*(N)$ is isomorphic to $\ep{[U(N)]}$
and write $\sim^{\dagger}$ rather than $\sim_{[U(N)]^{\dagger}}$. 
By Theorem~\ref{thm:uni:folding} we have a 
folding map $f^*:U^*(N) \to N$. Hence, by Theorem~\ref{relations}, there 
exists an equivalence relation $\sim^*$ on $V(U^*(N))$
such that $N$ is isomorphic to $U^*(N)/\!{\sim^*}$. 
Moreover, $u\sim^* v$ in $V(U^*(N))$ if and only if the pseudo MUL-trees
$U^*(N)(u)$ and $U^*(N)(v)$ are isomorphic. 

Now, $F(U(N))$ is isomorphic to $\ep{[U(N)]}/\!\ep{\sim}$, 
where $u' \ep{\sim} v'$ in $V(\ep{[U(N)]}$ 
if and only if $\ep{[U(N)]}(u')$ is isomorphic 
to $\ep{[U(N)]}(v')$. Therefore, the two equivalence 
relations $\sim^*$ and $\ep{\sim}$ 
are equal (up to the isomorphism between 
$U^*(N)$ and $\ep{[U(N)]}$), and hence $N$ is 
isomorphic to $F(U(N))$, as required. 
\epf
\end{proof}

Note that our definition for folding maps can be extended
to obtain folding maps between $X$-networks in general.
We will not pursue this possibility further here,  
but it could be of interest to understand 
categorical properties of such maps~\citep[cf.][Section 6]{BV02}. 

\section{Displaying trees in stable networks}\label{display}

Following \cite{ISS10}, we say that a phylogenetic network $N$ on $X$
{\em displays} a 
phylogenetic tree $T$ on $X$  if  there is a subgraph $N'$ of $N$
that is a subdivision of $T$ (i.\,e.\,$N'$ can be obtained from $T$
by replacing arcs $(u,v)$, $u,v\in V(T')$ by directed paths
from $u$ to $v$). We illustrate this
concept in Fig.~\ref{fig:reconciliation}.

\begin{figure}[h]
\begin{center}
\includegraphics[scale=0.8]{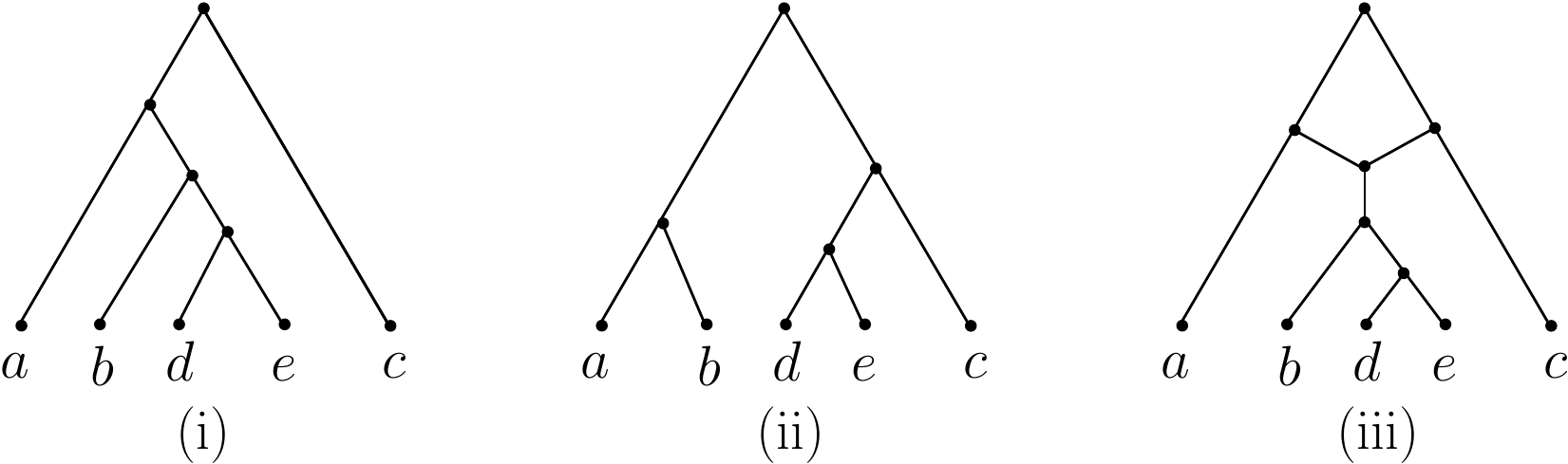}
\end{center}
\caption{The phylogenetic tree in (i) 
is displayed by the network in (iii), but
the tree in (ii) is not.} \label{fig:reconciliation}
\end{figure}
%
In \cite{K08} it is shown that it is NP-complete to 
decide whether or not a given phylogenetic tree is displayed
by a given phylogenetic network. On the other hand, in  \cite{ISS10}
it is shown that there are polynomial algorithms for
this problem for certain classes of networks e.\,g.\, binary tree-child 
networks. Thus it
is of interest to know the complexity of this question for stable phylogenetic
networks. We show that the following decision problem is NP-complete.\\

\noindent{\sc TreeDisplaying}\\
\noindent{\bf Instance:} A binary stable phylogenetic network on $X$
and a binary phylogenetic tree on $X$.\\
\noindent{\bf Question:} Is $T$ displayed by $N$?\\

To establish this fact, we show that this
problem is NP-complete when restricted to compressed, binary 
tree-sibling networks and apply 
Corollary~\ref{cor:tree-sibling-stable}.
In the proof, we shall use the following operation, which is a
modification of an operation with the same 
name defined in \cite{ISS10}.
Suppose that $N$ is
a binary phylogenetic network on $X$ and that $R$ is a binary
phylogenetic tree on $X$. Let $\rho_N$
denote the root of $N$, let $\rho_R$ denote the root of $R$, and 
let $v\in V(N)$. Assume that $x_v,x_v', p_v, q_v,\rho_v$ 
are pairwise distinct vertices not already contained in $N$
and that  $x_v$, $x_v'$,  $p_v$ and $\rho_v$
are also not contained in $R$. Then the operation
{\sc HangLeaves$(v)$} adds the vertices $x_v,x_v', p_v, q_v,\rho_v$ 
to $N$ as well as the arcs $(\rho_v, \rho_N)$, $(\rho_v, p_v)$,
$(p_v,q_v)$, $(v,q_v)$, $(p_v,x'_v)$ and $(q_v,x_v)$. In addition, it adds the
vertices $x_v,x_v',\rho_v, p_v$ to $R$ as well as the arcs $(\rho_v,\rho_R)$,
$(\rho_v,p_v)$,  $(p_v,x_v)$, and $(p_v,x'_v)$. 

\begin{theorem}\label{thm:tree-display-network}
{\sc TreeDisplaying} is NP-complete, even when restricted to the
class of binary, compressed tree-sibling networks. 
\end{theorem}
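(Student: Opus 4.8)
The plan is to establish NP-hardness by a reduction from a known NP-complete problem, and then observe that membership in NP is immediate. For membership in NP, a certificate is simply the choice of subgraph $N'$ of $N$ together with the subdivision witnessing that $N'$ is a subdivision of $T$; checking this takes polynomial time, so \textsc{TreeDisplaying} is in NP. (Note that by Corollary~\ref{cor:tree-sibling-stable} a binary, compressed tree-sibling network is stable, so hardness for the tree-sibling subclass immediately gives hardness for \textsc{TreeDisplaying} as stated, and conversely every instance of the restricted problem is an instance of \textsc{TreeDisplaying}, so it suffices to prove NP-hardness of the restricted version.)

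For the hardness reduction, I would start from the \textsc{TreeContainment} problem (\textquotedblleft is $T$ displayed by $N$?\textquotedblright) for general binary phylogenetic networks, which is NP-complete by \cite{K08}. Given an arbitrary instance $(N,T)$ of that problem, the idea is to transform $N$ into a binary, compressed tree-sibling network $N^\ast$ and $T$ into a binary tree $T^\ast$ on a common (enlarged) leaf set, in such a way that $T^\ast$ is displayed by $N^\ast$ if and only if $T$ is displayed by $N$. The operation \textsc{HangLeaves$(v)$} defined just before the theorem statement is the tool for this: applying it at every reticulation vertex $v$ of $N$ adds, for each such $v$, a fresh little gadget ($\rho_v,p_v,q_v,x_v,x_v'$ and the listed arcs) that attaches a tree-vertex $q_v$ as a child of $v$, and hence provides $v$ with a tree-sibling (the gadget also forces a new root $\rho_v$ sitting above the old root and a cherry $\{x_v,x_v'\}$ hanging off $p_v$). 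After applying \textsc{HangLeaves} at all reticulations we obtain $N^\ast$; one checks that $N^\ast$ is binary, that every reticulation of $N^\ast$ (the old ones, plus possibly $q_v$ if $v$ was a reticulation — here one must verify the arc count so that $q_v$ has indegree $2$ and is a reticulation, but then $x_v'$ under $p_v$ supplies a tree-sibling as well) has a tree-sibling, and that $N^\ast$ is compressed, i.e. the child of each reticulation is a tree vertex. Simultaneously, \textsc{HangLeaves$(v)$} adds the matching gadget to $R$ (here $R=T$), producing $T^\ast$, which is again a binary phylogenetic tree on the enlarged leaf set.

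The core of the argument is the equivalence: $T^\ast$ is displayed by $N^\ast$ iff $T$ is displayed by $N$. For the easy direction, any subdivision of $T$ inside $N$ extends, by routing through each gadget along its unique embedded copy of the corresponding piece of $T^\ast$, to a subdivision of $T^\ast$ inside $N^\ast$. For the converse I would argue that in any embedding of $T^\ast$ into $N^\ast$, the gadget leaves $x_v,x_v'$ and the forced structure above the old root pin down where the gadget subtrees must go — in particular each pendant cherry $\{x_v,x_v'\}$ of $T^\ast$ can only be realized at the gadget vertex $p_v$ in $N^\ast$, and this forces the embedding to use exactly the gadget arcs on the gadget part and hence to restrict to a valid embedding of $T$ into $N$ on the original part. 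This rigidity claim is the step I expect to be the main obstacle: one has to rule out all the \textquotedblleft crossed\textquotedblright{} ways an embedding of $T^\ast$ could mix gadgets with the original network, and the cleanest route is to choose the gadget so that the two new leaves $x_v,x_v'$ per reticulation, together with the new root $\rho_v$, are combinatorially distinguishable and can only be matched in one way, so that deleting the gadgets from any embedding of $T^\ast$ yields an embedding of $T$. Once this rigidity is in place, the reduction is clearly polynomial (we add $O(1)$ vertices and arcs per reticulation of $N$), and NP-completeness of the restricted problem — and hence of \textsc{TreeDisplaying} — follows.
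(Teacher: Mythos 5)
Your overall strategy --- reduce from the general tree-containment problem of \cite{K08}, use \textsc{HangLeaves} gadgets to turn $N$ into a binary, compressed tree-sibling network while augmenting $T$ in parallel, prove the display equivalence by a rigidity argument on the gadget cherries, and invoke Corollary~\ref{cor:tree-sibling-stable} --- is exactly the paper's. However, the place where you apply the gadget is wrong, and this breaks the construction. \textsc{HangLeaves}$(v)$ adds the arc $(v,q_v)$, i.e.\ it gives $v$ a new \emph{child}; if $v$ is a reticulation vertex of $N$ (in-degree $2$, out-degree $1$), the resulting vertex has in-degree $2$ and out-degree $2$, so it is neither a reticulation nor a tree vertex, and the resulting digraph is not a binary phylogenetic network (indeed not an $X$-network) at all. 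Moreover, even setting binarity aside, giving $v$ a new child does nothing towards giving $v$ a tree-\emph{sibling}, since siblings share a parent, not a child; and your construction also leaves untouched any original reticulation whose unique child is itself a reticulation, so compressedness is not achieved either.

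The fix, which is what the paper does, is to apply \textsc{HangLeaves} only at \emph{subdivision vertices} of carefully chosen arcs: first subdivide every arc of $N$ whose head is a reticulation and apply \textsc{HangLeaves} to each subdivision vertex $v_e$ (this keeps all original vertices binary, makes the network compressed because every reticulation's child becomes a tree vertex, and gives each newly created reticulation $q_{v_e}$ the tree-sibling $x'_{v_e}$); then, for each original reticulation $v$ --- whose two siblings are now reticulations of the form $q_{v_e}$ --- subdivide one arc $(p_s,s)$ to a chosen sibling $s$ and apply \textsc{HangLeaves} to that subdivision vertex $v_s$, so that $v_s$ itself becomes the required tree-sibling of $v$. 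Your sketch of the equivalence (the cherry $\{x_v,x'_v\}$ together with the new root $\rho_v$ forces any embedding of $T^*$ to route the gadget part of $T^*$ through the gadget and the remainder through the original network) is the right idea and is essentially the argument the paper gives, applied once per \textsc{HangLeaves} application and then iterated; but it only becomes available once the gadgets are attached at subdivision vertices rather than at the reticulations themselves.
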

\begin{proof}
By Corollary~\ref{cor:tree-sibling-stable}, it suffices to
restrict attention to the class of binary, compressed tree-sibling networks.
Let $T$ be a binary phylogenetic tree on $X$ and let $N$ be a
binary phylogenetic network on $X$.  We will (in polynomial time)
modify $N$ to a binary, compressed tree-sibling network $N^*$
on some leaf set $X^*$ that contains $X$ and, simultaneously,  
modify $T$ to a binary phylogenetic tree $T^*$ on $X^*$.
For $T^*$ and $N^*$ we then
show that $T$ is displayed 
by $N$ if and only if $T^*$ is displayed by $N^*$.
The result then follows as it has been shown in 
\cite{K08} that it is NP-complete to decide whether
or not a binary phylogenetic tree is displayed by a binary
phylogenetic network.

The construction of $N^*$ is in two steps.
In the first step,
we repeatedly apply the operation  {\sc HangLeaves}
to transform $N$ into a compressed tree-sibling network on some yet to be 
specified leaf set $X'$ and $T$ to a binary
phylogenetic tree on $X'$. To do this we associate to $N$ a
phylogenetic network $N_1$ in which every reticulation vertex has a
unique child and that child is a tree-vertex. This is achieved
by carrying out the following operation. 
For each arc $e$ in $N$ whose
head is a reticulation vertex of $N$
we subdivide $e$ by a new vertex $v_e$ and then apply
{\sc HangLeaves} to $v_e$. We denote the 
resulting rooted DAG by $N_1$. Note that $N_1$ 
is clearly a binary phylogenetic network on $X$,
every reticulation vertex of $N_1$
has a unique child, and that child is a tree-vertex. 
Furthermore, every reticulation
vertex of $N_1$ that is also a reticulation vertex of $N$
has two siblings in $N_1$ both of which are reticulation vertices. 

Next, we follow the proof of \citet[Theorem 3]{ISS10}
and choose for every reticulation vertex $v$ of $N_1$
that is also a reticulation vertex in $N$ one of its two
siblings. Let $s$ denote that sibling. Let $p_s$ denote 
the joint parent of $s$ and
$v$ in $N_1$. 
Then we subdivide the arc $(p_s,s)$ of $N_1$
by a new vertex $v_s$ and apply {\sc HangLeaves} to $v_s$.
We denote the resulting DAG by $N_2$. Note that
 $v_s$ is a tree-sibling of $v$ in $N_2$, and that 
$x'_{v_s}$ is a tree-sibling of $q_{v_s}$ in $N_2$.
%
Let $X^*$ denote the union of $X$ and all of the
leaves added to $N$ this way. Then
it is easy to check that the resulting DAG $N^*$  is a
binary, compressed tree-sibling network on $X^*$.
Moreover, the phylogenetic tree $T^*$ constructed
in concert with $N^*$  is clearly binary
and has leaf set $X^*$.

We now establish our claim that $T$
is displayed by $N$ if and only if $T^*$ is displayed by $N^*$. 
To do so, we first show that $T$ is displayed by $N$
if and only if $T'$ is displayed by $N'$ 
where $N'$ and $T'$ are
a phylogenetic network and a phylogenetic tree on 
$X'$, respectively, that are the result of a 
single application of operation {\sc HangLeaves}, to a vertex $v$ of $N$. 

Assume first that $T$ is displayed by $N$. To see that $T'$ 
is displayed by $N'$  note first that
there exists a subgraph $N''$ of $N$ that is a subdivision
of $T$. Combined with the fact that  the subgraph
of $N'$ with vertex set $x_v,x_v', p_v, q_v,\rho_v, \rho_N$  
and arc set
 $(\rho_v, \rho_N)$, $(\rho_v, p_v)$,
$(p_v,q_v)$, $(p_v,x'_v)$ and $(q_v,x_v)$ is a subdivision of the
subtree of $T''$ of $T'$ whose vertex set is $x_v,x_v',\rho_v, 
p_v,\rho_T$
and whose arc set is  $(\rho_v,\rho_T)$,
$(\rho_v,p_v)$,  $(p_v,x_v)$, and $(p_v,x'_v)$, it is easy to
see that $N''$ gives rise to a subgraph of $N'$ that is a subdivision of $T'$.
Thus, $T'$ is displayed by $N'$.  

Conversely, assume that $T'$ is displayed 
by $N'$, that is, there exists a subgraph $N''$ of $N'$ 
that is a subdivision of $T'$.
Clearly, the restriction of $N''$ to $V(N'')-\{x_v,x_v', p_v, q_v,\rho_v\}$
is a subgraph of $N$ that is a subdivision of $T'$ restricted to 
$V(T')-\{x_v,x_v',\rho_v, p_v\}$, that is $T$. Thus,
$T$ is displayed by $N$ which completes the proof of the
claim. A repeated application of the last claim implies that 
$T$ is displayed by $N$ if and only if 
$T^*$ is displayed by $N^*$.
\epf
\end{proof}

\section{Weakly displaying trees} \label{sec:weak-display}

Given a phylogenetic tree $T$ and a network $N$ on $X$, we say that 
$T$ is {\em weakly displayed} by $N$ if it is displayed by $U(N)$
(that is, there exists a subgraph of $U(N)$ that is a subdivision
of $T$). For example, both of the trees in Fig.~\ref{fig:reconciliation}
are weakly displayed by the phylogenetic network $N$, but 
the tree in (ii) is not displayed by $N$. As we shall see,
 this concept is closely related to the problem of reconciling 
gene trees with species networks. 
In Section~\ref{display}, we studied the problem of
displaying trees in networks, in particular
showing that it is NP-complete to decide whether or not a 
binary phylogenetic tree $T$ is displayed by a phylogenetic network $N$
even if it is stable.
In this section, we show that, in contrast, one can 
decide in polynomial time whether or not a
tree is weakly displayed by any given phylogenetic network.

Before presenting our algorithm, we first derive a 
characterization for when a tree is weakly displayed by a 
phylogenetic network in terms of so-called tree reconciliations.
Given a phylogenetic network $N$ on $X$, let $V_{\tree}(N)$ be the set 
consisting of the tree vertices in $V(N)$ together with the root on $N$.
Following~\cite{ZNWZ},  a {\em reconciliation map} between a 
phylogenetic tree $T$ on $X$  and a phylogenetic network
$N$ is a map $r:V(T)\to V(N)$ 
such that $r(v)\in V_{\tree}(N)$ for all $v\in V(T)$,   $r(x)=x$ 
holds for all $x\in X$, and every arc $(u,v)$ in $A(T)$ is 
associated with a directed path $\bP\!_r(u,v)$ in $N$  
with initial vertex $r(u)$ and terminal vertex $r(v)$. 

We now give the aforementioned
characterization for when a tree is weakly displayed by a phylogenetic network.
We call a reconciliation $r$ between $T$ and $N$  
{\em locally separated} if for each pair of 
vertices $v_1$ and $v_2$ in $T$ that have the same parent 
$v$, both $\bP\!_r(v,v_1)$ and $\bP\!_r(v,v_2)$ contain at least one arc, 
and the initial arc in $\bP\!_r(v,v_1)$ is distinct from the 
initial arc in $\bP\!_r(v,v_2)$.

\begin{theorem}\label{weak-reconcile}
Suppose that $N$ is a phylogenetic network on $X$. Then 
a phylogenetic tree $T$ on $X$ is weakly displayed by 
$N$ if and only if there exists a locally separated 
reconciliation between $T$ and $N$.
\end{theorem}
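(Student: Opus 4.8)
The plan is to prove both directions by exploiting the folding map $f^*:U^*(N)\to N$ from Theorem~\ref{thm:uni:folding}, which realises $U(N)$ (after suppressing degree-two vertices) as the universal object sitting over $N$. Throughout I identify $U(N)$ with $U^*(N)$ up to suppression of in/out-degree-one vertices, so that a vertex of $U(N)$ is essentially a directed path in $N$ starting at $\rho_N$, and $f^*$ sends such a path to its last vertex.

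\medskip
\noindent\emph{($\Leftarrow$) From a locally separated reconciliation to a weak display.}
Suppose $r:V(T)\to V(N)$ is a locally separated reconciliation, with associated paths $\bP\!_r(u,v)$ in $N$ for each arc $(u,v)$ of $T$. The idea is to lift $r$ through $f^*$ in a top-down fashion to build an embedding of (a subdivision of) $T$ into $U^*(N)$. First I would send $\rho_T$ to the trivial path $\rho_N$ in $U^*(N)$, which lies in the fibre $(f^*)^{-1}(\rho_N)=\{\rho_{U^*(N)}\}$ since $r(\rho_T)=\rho_N$ (the reconciliation condition forces this because there is no tree vertex strictly above $\rho_N$ and every leaf maps to itself, so $\rho_T$ must map to $\rho_N$). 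Then, proceeding down $T$, whenever $\tilde r$ has been defined on the parent $u$ of some vertex $v$, I use the arc-lifting property of the folding map repeatedly along the path $\bP\!_r(u,v)=a_1a_2\cdots a_k$: lift $a_1$ at $\tilde r(u)$, then $a_2$ at the head of that lift, and so on, obtaining a directed path of length $k$ in $U^*(N)$ from $\tilde r(u)$ to some vertex whose image under $f^*$ is $r(v)$; declare that vertex to be $\tilde r(v)$ and the path to be the image of the arc $(u,v)$. This is exactly the construction used in the proof of Theorem~\ref{lift}. The resulting subgraph of $U^*(N)$ is a subdivision of $T$ provided the images of the arcs out of a common vertex $u$ start along \emph{distinct} arcs of $U^*(N)$ and are internally vertex-disjoint from each other — and here is precisely where local separation is used: because $\bP\!_r(u,v_1)$ and $\bP\!_r(u,v_2)$ begin with distinct arcs of $N$, their lifts at $\tilde r(u)$ begin with distinct arcs of $U^*(N)$ (two different arcs with the same tail lift to two different arcs with the same tail, by uniqueness of the lift), hence the two lifted paths diverge immediately and, since $U^*(N)$ is a tree, never meet again. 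After suppressing degree-two vertices this yields a subdivision of $T$ inside $U(N)$, so $T$ is weakly displayed by $N$.

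\medskip
\noindent\emph{($\Rightarrow$) From a weak display to a locally separated reconciliation.}
Conversely, suppose $T$ is weakly displayed by $N$, i.e. there is a subgraph $S$ of $U(N)$ that is a subdivision of $T$. Undoing the suppression, lift $S$ to a subgraph $S^*$ of $U^*(N)$ that is a subdivision of $T$ in which each arc of $T$ corresponds to a directed path of $S^*$, and each leaf $x$ of $T$ corresponds to the (unique) leaf of $U^*(N)$ labelled $x$. Now define $r:=f^*|_{V(T)}$, i.e. for a vertex $v$ of $T$ regarded as a vertex of $S^*\subseteq V(U^*(N))$, set $r(v):=f^*(\text{that vertex})$, and for an arc $(u,v)$ of $T$ let $\bP\!_r(u,v)$ be the $f^*$-image of the corresponding path in $S^*$. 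One checks: $r(v)\in V_{\tree}(N)$ because a branching or leaf vertex of the subdivision $S^*$ must map to a vertex of $N$ that is a tree vertex or the root (a reticulation vertex of $N$ has a unique outgoing arc, and $f^*$ is a folding map, so the fibre over a reticulation vertex consists of vertices of $U^*(N)$ with out-degree one, which after suppression are not vertices of $T$); $r(x)=x$ for $x\in X$ by the labelling convention; and $\bP\!_r(u,v)$ is a directed path from $r(u)$ to $r(v)$ because $f^*$ carries directed paths to directed paths. It remains to verify local separation: for $v_1,v_2$ children of $v$ in $T$, the paths in $S^*$ realising $(v,v_1)$ and $(v,v_2)$ leave the vertex (call it $\hat v$) of $U^*(N)$ corresponding to $v$ along two distinct arcs of $U^*(N)$ (since $S^*$ is a subdivision of $T$ and $\hat v$ is a branch vertex). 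Because $f^*$ restricted to the arcs with tail $\hat v$ is \emph{injective} — this is part of the defining property of a folding map, namely that liftings at a fixed vertex are unique, which forces distinct arcs with the same tail to have distinct images — the two initial arcs of $\bP\!_r(v,v_1)$ and $\bP\!_r(v,v_2)$ are distinct arcs of $N$; and each of these paths is non-trivial because $T$ has no degree-two vertices, so each arc of $T$ is subdivided into at least one arc. Hence $r$ is locally separated.

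\medskip
\noindent\emph{Main obstacle.}
The routine direction is ($\Leftarrow$); the step that needs the most care is the claim in ($\Rightarrow$) that a branch vertex of $T$ cannot be mapped by $f^*$ to a reticulation vertex of $N$, and the dual injectivity statement that $f^*$ is locally injective on out-arcs. Both hinge on unwinding the definition of a folding map: uniqueness of arc liftings at a vertex gives local injectivity on out-arcs, and the surjectivity plus lifting property together with the structure of $N$ (every non-root vertex is a tree vertex or a reticulation vertex, not both) controls what the fibres over reticulation vertices look like. I expect the only genuinely delicate bookkeeping is keeping straight the passage between $U^*(N)$ (where folding-map machinery lives cleanly) and $U(N)$ (where "weakly displays" is defined), i.e. tracking how suppression of in/out-degree-one vertices interacts with subdivisions of $T$; this is bureaucratic rather than deep.
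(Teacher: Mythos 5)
Your proof is correct and follows essentially the same route as the paper's: both directions go through the folding map $f^*:U^*(N)\to N$ of Theorem~\ref{thm:uni:folding}, lifting a locally separated reconciliation to an injective embedding of a subdivision of $T$ in $U^*(N)$ (the paper simply invokes Theorem~\ref{lift} where you unwind its recursive construction), and conversely composing an embedding with $f^*$ and using local injectivity of $f^*$ on out-arcs (the paper derives the same fact via property~(v) of Theorem~\ref{relations}). One small inaccuracy: the definition of a reconciliation does not force $r(\rho_T)=\rho_N$ (it only requires $r(\rho_T)\in V_{\tree}(N)$), but this is harmless, since the lift can be started at any vertex of the fibre $(f^*)^{-1}(r(\rho_T))$, exactly as in the proof of Theorem~\ref{lift}.
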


\begin{proof}
We first prove that if there is a locally separated 
reconciliation $r$ between $T$ and $N$, then $T$
is weakly displayed by $N$. We illustrate the main 
idea of the proof in Fig.~\ref{fig:com-diagram} -- essentially,
the map $r$ induces an $X$-morphism $r^*$ from a 
subdivision $T^*$ of $T$ into $N$, and so,
using Theorem~\ref{lift}, we obtain an $X$-morphism 
$\tilde{r}$ from $T^*$ to $U^*(N)$, from which we can 
then deduce that 
$T$ is displayed by $U(N)$.
 
More specifically, suppose that $r$ is a locally separated 
reconciliation between $T$ and $N$. 
Since each arc 
in $T$ is associated with a directed path in $N$ 
which contains at least one arc, it follows that $r$ induces 
an $X$-morphism $r^*$ from a subdivision $T^*$ of $T$ to 
$N$. By Theorem~\ref{lift}, let $\tilde{r}$ be 
an $X$-morphism from $T^*$ to $U^*(N)$ such 
that $f^*\circ \tilde{r}=r^*$. Since $r$ is 
locally separated, it follows that the map 
$\tilde{r}$ is injective, and hence 
$T^*$ is isomorphic to a subgraph of $U^*(N)$.  

\begin{figure}[h]
\begin{center}
\includegraphics[scale=1.2]{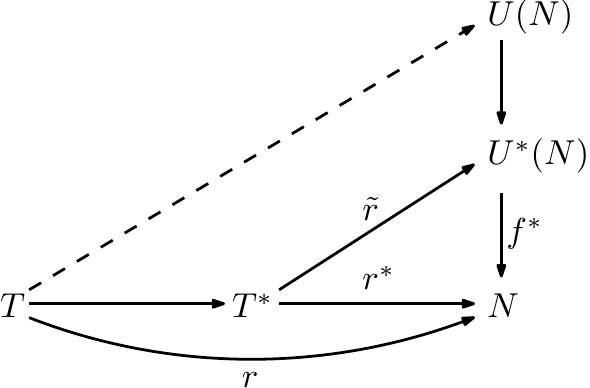}
\end{center}
\caption{The relationship between the maps described in the proof of
Theorem~\ref{weak-reconcile}.\label{fig:com-diagram}
}
\end{figure}

Now, consider the set $V_0\subseteq V(T^*)$ that is
the pre-image of the in-degree one and out-degree one vertices 
in $U^*(N)$ under $\tilde{r}$. 
Then, since $\tilde r$ is an $X$-morphism,
each vertex in $V_0$ has in-degree one and out-degree one. 
Let $T'$ be the tree obtained from $T^*$ by suppressing 
all vertices in $V_0$. Then $T'$ is a subdivision of $T$. 
Since $U(N)$ does not contain any in-degree one and out-degree one vertices
and  $U^*(N)$ is a subdivision of $U(N)$ it follows that $T'$ is isomorphic to 
a subdivision of $U^*(N)$. Thus, $T$ is displayed by $U(N)$, and so 
$T$ is weakly displayed by $N$, as required. 

Conversely, suppose that $T$ is weakly displayed by $N$. Then there 
exists a subdivision $T'$ of $T$ 
such that $T'$ is isomorphic to a subgraph of $U(N)$. Since 
$U^*(N)$ is a subdivision of $U(N)$, there exists a 
subdivision $T^*$ of $T'$ (and hence also a subdivision of $T$) 
such that $T^*$ is isomorphic to a subgraph of $U^*(N)$.  
Denote the $X$-morphism from $T^*$ to $U^*(N)$ induced by this 
isomorphism by $r^*$ and let $f^*$ be the folding map from 
$U^*(N)$ to $N$ given by Theorem~\ref{thm:uni:folding}. 
Then the $X$-morphism $f^*\circ r^*$ from $T^*$ to $N$ 
induces a map $r$ from $V(T)$ to $V(N)$ defined by putting
$r(v)=f^*\circ r^*(v)$, for all $v\in V(T)$. Clearly, $r(v)$
exists because $r^*(v)$ is contained in $V(U^*(N))$ and thus in $V(U(N))$
as $r^*$ is an $X$-morphism. Moreover, since $f^*$ and $r^*$ are 
$X$-morphisms it follows that $r(v)$ is a tree vertex of $N$.

Now, for every arc $(u,v)$ in $T$, denote 
the subdivision of $(u,v)$ in $T^*$ by  $P_{u,v}$ 
(that is, $P_{u,v}$ is the necessarily unique path from $u$ to $v$ in $T^*$) 
and let $\bP\!_r(u,v)$  be the image of $P_{u,v}$ 
under  $f^*\circ r^*$, a directed path from $r(u)$ to $r(v)$ 
in $N$ which contains at least one arc. 
Then it follows that $r$ is a reconciliation between 
$T$ and $N$. Moreover, to  
see that $r$ is locally separated, consider 
an arbitrary pair of distinct 
vertices $v_1$ and $v_2$ in $V(T)$ that have the 
same parent $v$. Denote the initial arcs of the 
two (necessarily distinct) directed paths $P_{v,v_1}$ and $P_{v,v_2}$ in $T^*$ 
by $a_1$ and $a_2$, respectively. Furthermore, for any arc
$a$ of $U(N)$ put $t(a)=t_{U(N)}(a)$ and $h(a)=h_{U(N)}(a)$.
Since $r^*$ is induced by an isomorphism between $T^*$ and a certain 
subgraph of $U^*(N)$, we obtain 
$r^*(a_1)\not =r^*(a_2)$. Combined with 
$t(r^*(a_1))=t(r^*(a_2))=v$ and 
Property~(v) in Theorem~\ref{relations} 
it follows that $f^*(h(r^*(a_1)))\not =f^*(h(r^*(a_2)))$. 
Therefore $\bP\!_r(v,v_1)$ and $\bP\!_r(v,v_2)$ 
contain distinct initial arcs, from which it follows 
that $r$ is locally separated, as required. 
\epf
\end{proof}

\noindent 
In light of the last result, deciding whether
or not a phylogenetic tree is weakly displayed by a 
phylogenetic network is equivalent to the following decision problem:

\medskip

\noindent{\sc Locally separated reconciliation}\\
\noindent{\bf Instance:} A phylogenetic network $N$ on $X$
and a binary phylogenetic tree $T$ on $X$.\\
\noindent{\bf Question:} Does there exist a locally separated 
reconciliation between $T$ and $N$?

\medskip
We now present a dynamic programming algorithm to solve this problem. 
Let $N$ be a phylogenetic network on $X$ and let $T$ 
be a binary phylogenetic tree on $X$. Then for every tree vertex
$v$ in $N$ we denote by $N(v)$ the phylogenetic network obtained
from by $N$ by first restricting $N$ to $v$ and 
all the vertices of $N$ below $v$ and then suppressing any resulting
vertices with in-degree one and out-degree one. In addition,
we define a function $\tau: V(T)\times V(N) \to \{0,1\}$ 
as follows. If $v$ is not a leaf in $V(T)$, 
then we set $\tau(v,u)=1$ if and only if there exists 
some $u'\in V_{tr}(N)$ such that (i) $u'=u$ or $u'$ is below $u$ in $N$, 
and (ii) there exists a locally separated reconciliation 
between $T(v)$ and $N({u'})$. If $v$ is a leaf with label $x$, 
then we set $\tau(v,u)=1$ if and only if $u$ is a leaf 
in $N$ labeled with $x$ or $x$ is a leaf in $N$ 
below $u$. We remark that $\tau(v,u)=1$ implies that $\tau(v,u^*)=1$ holds for 
all  $u^*$ such that $u$ is below $u^*$ in $N$. 

By definition, there exists a locally separated reconciliation 
between $T$ and $N$ if and only if $\tau(\rho_T,\rho_N)=1$.
In order to compute the value of $\tau(\rho_T,\rho_N)$, we 
will use the following result concerning the function $\tau$. 

\begin{proposition}
\label{prop:dynamic}
Let $T$ be a binary phylogenetic tree on $X$, and 
$N$ a phylogenetic network on $X$. Suppose that $v$ is an 
interior vertex in $T$ with two children $v_1$ and $v_2$ and $u\in V(N)$. 
Then $\tau(v,u)=1$ if and only if $u$ is an interior vertex 
in $N$ with $\tau(v,u')=1$ for a child $u'$ of $u$, or there exist 
two  distinct children $u_1, u_2$ of $u$ in $N$ such 
that $\tau(v_1,u_1)=1$ and $\tau(v_2,u_2)=1$.
\end{proposition}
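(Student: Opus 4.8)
The plan is to prove both directions by unwinding the definition of a locally separated reconciliation, localising it to the subtree $T(v)$ and the subnetwork $N(u')$ for an appropriate descendant $u'$ of $u$. Recall that $\tau(v,u)=1$ means there is some $u' \in V_{tr}(N)$ with $u'=u$ or $u'$ below $u$ and a locally separated reconciliation between $T(v)$ and $N(u')$. The key observation I would isolate first is a ``root-handling'' lemma: if $r$ is a locally separated reconciliation between $T(v)$ and a network $M$, then $r(v)$ is the root of $M$ (since $r(v)$ must lie in $V_{tr}(M)$ and every vertex is reachable from $r(v)$ via the directed paths associated to arcs of $T(v)$, forcing $r(v)$ to have no proper ancestor in $M$). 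This lets me pin down where the root of the subtree goes.

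For the forward direction, suppose $\tau(v,u)=1$, witnessed by $u'$ and a locally separated reconciliation $r$ between $T(v)$ and $N(u')$. If $u'$ is a proper descendant of $u$, or more generally if $r(v_1)$ and $r(v_2)$ are both below a single child $u'$ of $u$, I want to conclude $\tau(v,u')=1$ for that child (using that $r$ restricted down still witnesses reconciliation of $T(v)$ with an even smaller subnetwork). Otherwise $u'=u$ (I may assume this WLOG by taking $u'$ maximal), so $r(v)=u$ by the root lemma; the first arc of $\bP_r(v,v_1)$ leaves $u$ and enters a child $u_1$ of $u$, similarly $\bP_r(v,v_2)$ enters a child $u_2$, and local separation forces the two initial arcs distinct, hence $u_1 \neq u_2$ (here I use that $N$ has no multi-arcs, so distinct arcs out of $u$ have distinct heads). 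Then the restriction of $r$ to $T(v_1)$, together with the tail portion of $\bP_r(v,v_1)$ reinterpreted inside $N(u_1)$, witnesses $\tau(v_1,u_1)=1$; symmetrically $\tau(v_2,u_2)=1$. The one subtlety is that $r(v_1)$ need not equal $u_1$ — it is $u_1$ or below $u_1$ — which is exactly why the definition of $\tau$ quantifies existentially over descendants, so this goes through cleanly.

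For the converse, I would assemble a locally separated reconciliation. In the first case, $u$ is interior with a child $u'$ satisfying $\tau(v,u')=1$; then any $u''$ below $u'$ (or $u'$ itself) witnessing the reconciliation of $T(v)$ with $N(u'')$ also witnesses $\tau(v,u)=1$ directly, since $u''$ is below $u$. In the second case, pick distinct children $u_1,u_2$ with $\tau(v_i,u_i)=1$, witnessed by $u_i' \preceq u_i$ and locally separated reconciliations $r_i$ between $T(v_i)$ and $N(u_i')$; by the root lemma $r_i(v_i)$ is the root of $N(u_i')$, i.e. essentially $u_i'$. Define $r$ on $T(v)$ by $r(v)=u$, $r = r_i$ on $T(v_i)$, associate to the arc $(v,v_i)$ the path from $u$ to $u_i$ (a single arc, as $u_i$ is a child of $u$) concatenated with a directed path from $u_i$ down to $u_i'=r_i(v_i)$ inside $N$. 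This is a reconciliation between $T(v)$ and $N(u)$ (with $r(v)=u \in V_{tr}(N)$ and leaf labels preserved since each $r_i$ preserves them); it is locally separated because each $\bP_r(v,v_i)$ has initial arc $(u,u_i)$ and $u_1 \neq u_2$ gives distinct initial arcs, while local separation inside each $T(v_i)$ is inherited from $r_i$. Hence $\tau(v,u)=1$ with $u'=u$.

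\textbf{Main obstacle.} The chief technical point, and where I would spend the most care, is the bookkeeping around the existential quantifier over descendants in the definition of $\tau$: the image $r(v_i)$ of a child is only guaranteed to be at or below $u_i$, not equal to $u_i$, so I must consistently pass between ``reconciliation with $N(u_i)$'' and ``reconciliation with $N(u_i')$ for some $u_i' \preceq u_i$'' and verify that concatenating a downward path $u_i \rightsquigarrow u_i'$ to the front of the reconciliation preserves both the reconciliation axioms and local separation. The root-handling lemma and the no-multi-arcs hypothesis (for translating ``distinct initial arcs'' into ``distinct child vertices'' and back) are the two facts that make the argument close up; both are immediate but must be invoked explicitly.
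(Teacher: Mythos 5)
Your overall strategy coincides with the paper's: in the forward direction you split according to whether the image of $T(v)$ already fits below a single child of $u$ (giving the first disjunct) or the root of $T(v)$ is sent to $u$ itself (in which case local separation together with the absence of multi-arcs yields two distinct children $u_1\neq u_2$ with $\tau(v_1,u_1)=\tau(v_2,u_2)=1$); in the converse direction you glue the two witnessing reconciliations at a new root mapped to $u$ and route $\bP\!_r(v,v_i)$ through the arc $(u,u_i)$ followed by a descending path to the witnessing vertex. This is exactly the paper's construction, including the bookkeeping around the existential quantifier in the definition of $\tau$.

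The one genuine defect is your ``root-handling'' lemma, which is false as stated and as justified. A reconciliation map is not required to be surjective, so from a locally separated reconciliation $r$ between $T(v)$ and a network $M$ you may only conclude that $r(v)$ is a common ancestor in $M$ of all the images $r(w)$, $w\in V(T(v))$ --- not that $r(v)$ has no proper ancestor in $M$: the root of $M$ is a proper ancestor of $r(v)$ whenever $r(v)\neq\rho_M$, and $M$ may well possess a non-root vertex from which every leaf is still reachable (e.g.\ when the other child of $\rho_M$ is a reticulation vertex also fed by that vertex), so nothing forces $r(v)=\rho_M$. The statement you actually need, and the one the paper uses implicitly, is the re-rooting observation: such an $r$ restricts to a locally separated reconciliation between $T(v)$ and $N(r(v))$, so the witness $u'$ in the definition of $\tau$ may always be replaced by $r(v)$. (The paper achieves the same effect by assuming $\tau(v,u')=0$ for every child $u'$ of $u$, which forces $f(v)=u$.) Each of your invocations of the lemma --- deducing $r(v)=u$ in the forward direction, and identifying $r_i(v_i)$ with $u_i'$ in the converse --- can be replaced by this restriction argument without altering anything else, so the proof closes up after this local repair.
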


\begin{proof}
We begin by establishing the `if' direction. 
Note that if $\tau(v,u')=1$ holds for 
a child $u'$ of $u$ then, 
by the previous remark,  $\tau(v,u)=1$ follows. 
Therefore we may assume that $u$ is an interior 
vertex in $N$ with two children $u_1\not =u_2$ in $N$ 
such that $\tau(v_1,u_1)=1$ and $\tau(v_2,u_2)=1$. 
This implies that there exist two (not necessarily distinct) 
vertices $u'_1$ and $u'_2$ in $N$
such that for $i=1,2$, there exists 
a locally separated reconciliation $f_i$ between $T({v_i})$ and $N({u'_i})$.  
Fix a directed path $P_i$ in $N$ obtained by combining the 
arc $(u,u_i)$ and an arbitrary path from 
$u_i$ to $u'_i$. Since $u_1\not =u_2$ the paths
$P_1$ and $P_2$ both contain at least one arc and 
their respective first arcs are distinct.

Now consider the map $f:V(T(v))\to V(N(u))$ 
defined, for all $v'\in V(T(v))$, 
by $f(v')=u$ if $v'=v$, $f(v')=f_1(v')$ if $v'$ 
is contained in $T(v_1)$, and $f(v')=f_2(v')$ 
otherwise. Since  $v_1$ and $v_2$ are the two children of $v$ and
$\bP\!_f(v,v_i)=P_i$ holds for $i=1,2$  and 
$\bP\!_f(v',v'')=\bP\!_{f_i}(v',v'')$ holds 
for each arc $(v',v'')$ in $T(v_i)$
it follows that $f$ is a reconciliation between
$T(v)$ and $N(u)$. Combined with the 
fact that $f_1$ and $f_2$ are locally separated, 
it follows that $f$ is also locally separated. Hence, 
$\tau(v,u)=1$, as required. 

Conversely, suppose that $\tau(v,u)=1$ for $v$ an 
interior vertex in $T$ and $u \in V(T)$. 
We may further assume that $\tau(v,u')=0$ for each child $u'$ 
of $u$ as otherwise the proposition clearly follows. 
Under this assumption, it follows that there exists a 
locally separated reconciliation $f$ 
between $T(v)$ and $N(u)$ with $f(v)=u$.  

Now, let $u'_i=f(v_i)$ for $i=1,2$ (where $u'_1$ is not 
necessarily distinct from $u'_2$). Since $v_i$ is a child 
of $v$ and  $f$ is a locally separated reconciliation,  
it follows that $u'_i$ is below $u$ and that $\tau(v_i,u'_i)=1$. 
Considering the two directed paths $\bP\!\!_f(v,v_i)$ which have the 
same starting vertex $v$ but distinct initial arcs,  
it follows that there exist two 
distinct children $u_1$ and $u_2$ of $u$ such that 
$u'_i$ is contained in $N(u_i)$ for $i=1,2$. Together with 
$\tau(v_i,u'_i)=1$, this implies $\tau(v_i,u_i)=1$, as required. 
\epf
\end{proof}

The above proposition forms the basis of a dynamic programming algorithm 
for computing $\tau(\rho_T,\rho_N)$ 
in polynomial time, which we now briefly describe. 

Let $m=|V(T)|$, $n=|V(N)|$, and let $k$ be the maximum number 
of children that any vertex in $N$ may have. 
Note first that a topological ordering $\{v_1,\dots,v_m\}$ 
of $V(T)$ (that is, a linear ordering of $V(T)$ such that 
$v_i$ is below $v_j$ in $T$ implies $j>i$), can be 
computed in $O(m)$ time. Similarly, we can compute a
topological ordering $\{u_1,\dots,u_n\}$ of $V(N)$ in $O(kn)$ time. 
Now, noting that $v_m=\rho_T$ and $u_n=\rho_N$, 
consider the $m\times n$ matrix whose $(i,j)$th entry is 
$\tau(v_i,u_j)$. Then by Proposition~\ref{prop:dynamic}, it 
takes $O(mnk)$ time to fill this matrix and, therefore,
to compute $\tau(\rho_T,\rho_N)$.
Since a binary phylogenetic tree $T$ on $X$ has $2|X|-1$ vertices
\cite[Proposition 1.2.3]{SS03}, we 
summarize this last discussion in the following corollary. 

\begin{corollary}\label{compute}
Suppose that $T$ is a binary phylogenetic tree on $X$, and 
that $N$ is a phylogenetic network on $X$. Then, using a dynamic programming
algorithm, it can 
be decided in 
$$
O(|X| \cdot |V(N)| \cdot \max_{v \in V(N)}|ch(v)|)
$$ 
time 
whether or not $T$ is weakly displayed by $N$.
\end{corollary}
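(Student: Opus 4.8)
The plan is to assemble the corollary directly from the dynamic programming analysis already sketched in the paragraphs preceding it, together with Theorem~\ref{weak-reconcile}, which reduces ``weakly displayed'' to the existence of a locally separated reconciliation, and Proposition~\ref{prop:dynamic}, which gives the recursion for the function $\tau$. So the proof will be short and will mostly amount to bookkeeping the running time in the notation of the statement.

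First I would recall that, by Theorem~\ref{weak-reconcile}, $T$ is weakly displayed by $N$ if and only if there is a locally separated reconciliation between $T$ and $N$, and that by the definition of $\tau$ this holds if and only if $\tau(\rho_T,\rho_N)=1$. Next I would set up the data: compute a topological ordering $v_1,\dots,v_m$ of $V(T)$ with $v_m=\rho_T$, and a topological ordering $u_1,\dots,u_n$ of $V(N)$ with $u_n=\rho_N$, where $m=|V(T)|$ and $n=|V(N)|$; both orderings can be obtained in time linear in the number of arcs, hence within the claimed bound. I would then fill the $m\times n$ table of values $\tau(v_i,u_j)$ in order of increasing $i$ and $j$. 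The base case (when $v_i$ is a leaf of $T$) is handled directly from the definition of $\tau$ for leaves, and for the inductive case Proposition~\ref{prop:dynamic} expresses $\tau(v_i,u_j)$ as a Boolean combination of table entries $\tau(\cdot,u')$ with $u'$ a child of $u_j$ (already computed, since children precede parents in the topological order) and of entries $\tau(v_{i'},u')$ with $v_{i'}$ a child of $v_i$; each such evaluation scans the at most $\max_{v\in V(N)}|ch(v)|$ children of $u_j$ a constant number of times.

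Finally I would add up the cost: there are $mn$ table entries, each computed in $O(\max_{v}|ch(v)|)$ time, giving total time $O(m\cdot n\cdot \max_{v\in V(N)}|ch(v)|)$; reading off $\tau(\rho_T,\rho_N)=\tau(v_m,u_n)$ then decides the question. Since $T$ is a binary phylogenetic tree on $X$, it has $2|X|-1$ vertices by \cite[Proposition 1.2.3]{SS03}, so $m=O(|X|)$, and substituting this for $m$ yields the stated bound $O(|X|\cdot|V(N)|\cdot\max_{v\in V(N)}|ch(v)|)$.

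There is no real obstacle here, since the genuine content (the reduction to reconciliations and the correctness of the recursion) has already been proved in Theorem~\ref{weak-reconcile} and Proposition~\ref{prop:dynamic}; the only thing to be careful about is that when applying the recursion to entry $(i,j)$ all the entries it depends on have already been computed, which is guaranteed precisely because we process $V(T)$ and $V(N)$ in topological order from the leaves up. I would also note in passing that the algorithm as described only decides existence, but a locally separated reconciliation (and hence, via Theorem~\ref{weak-reconcile}, an explicit subdivision of $T$ inside $U(N)$) can be recovered by standard traceback through the table without affecting the asymptotic running time.
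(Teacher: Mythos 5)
Your proposal is correct and follows essentially the same route as the paper: reduce to the existence of a locally separated reconciliation via Theorem~\ref{weak-reconcile}, fill the $|V(T)|\times|V(N)|$ table of $\tau$-values in topological order using Proposition~\ref{prop:dynamic} at $O(\max_{v\in V(N)}|ch(v)|)$ cost per entry, and substitute $|V(T)|=2|X|-1$ to get the stated bound. Nothing further is needed.
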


\begin{acknowledgements}
VM and MS thank O.~Gascuel for
organizing the ``Mathematical and Computational Evolutionary Biology''
meeting, June 2012, Hameau de l'Etoile, France, where first ideas
for the paper were conceived.

\end{acknowledgements}

\bibliographystyle{spbasic}      
{\footnotesize
\bibliography{foldingNet.bib}   
}

\end{document}